\def\istcsabs{1}
\newlength{\saveparindent}
\newlength{\saveparskip}
\newcommand{\eqref}[1]{\mbox{Equation~(\ref{#1})}}
\newcounter{ctr}
\def\subsubsection{\@startsection{subsubsection}{3}{\z@}{-3.25ex plus
-1ex minus -.2ex}{1.5ex plus .2ex}{\large\rm}}
\def\subsubsection{\@startsection{subsubsection}{3}{\z@}{-1.25ex plus
-1ex minus -.2ex}{1.5ex plus .2ex}{\large\rm}}
\def\appearsin#1{\gdef\@appearsin{#1}}
\def\maketitle{\par
 \begingroup
 \def\thefootnote{\fnsymbol{footnote}}
 \def\@makefnmark{\hbox
 to 0pt{$^{\@thefnmark}$\hss}}
 \if@twocolumn
 \twocolumn[\@maketitle]
 \else \newpage
 \global\@topnum\z@ \@maketitle \fi\thispagestyle{plain}\@thanks
 \endgroup
 \setcounter{footnote}{0}
 \let\maketitle\relax
 \let\@maketitle\relax
 \gdef\@thanks{}\gdef\@author{}\gdef\@title{}\gdef\@appearsin{}
          \let\thanks\relax}
\def\@maketitle{\newpage
 \noindent \@appearsin
 \vskip 1in \begin{center}
 {\LARGE \@title \par} \vskip 1.5em {\large \lineskip .5em
\begin{tabular}[t]{c}\@author
 \end{tabular}\par}
 \vskip 1em {\normalsize \@date} \end{center}
 \par
 \vskip 1.5em}
\def\abstract{\if@twocolumn
\section*{Abstract}
\else \small
\begin{center}
{\bf Abstract\vspace{-.5em}\vspace{0pt}}
\end{center}
\quotation
\fi}
\def\endabstract{\if@twocolumn\else\endquotation\fi}
\newcommand{\stoc}[1]{\ifnum#1=
71{{\sl Proceedings of the Third Annual Symposium
on the Theory of Computing,\/} ACM, 1971}\else{\ifnum#1=
83{{\sl Proceedings of the Fifteenth Annual Symposium
on the Theory of Computing,\/} ACM, 1983}\else{\ifnum#1=
84{{\sl Proceedings of the Sixteenth Annual Symposium
on the Theory of Computing,\/} ACM, 1984}\else{\ifnum#1=
85{{\sl Proceedings of the Seventeenth Annual Symposium
on the Theory of Computing,\/} ACM, 1985}\else{\ifnum#1=
86{{\sl Proceedings of the Eighteenth Annual Symposium
on the Theory of Computing,\/} ACM, 1986}\else{\ifnum#1=
87{{\sl Proceedings of the Nineteenth Annual Symposium
on the Theory of Computing,\/} ACM, 1987}\else{\ifnum#1=
88{{\sl Proceedings of the Twentieth Annual Symposium
on the Theory of Computing,\/} ACM, 1988}\else{\ifnum#1=
89{{\sl Proceedings of the Twenty First Annual Symposium
on the Theory of Computing,\/} ACM, 1989}\else{\ifnum#1=
90{{\sl Proceedings of the Twenty Second Annual Symposium
on the Theory of Computing,\/} ACM, 1990}\else{\ifnum#1=
91{{\sl Proceedings of the Twenty Third Annual Symposium
on the Theory of Computing,\/} ACM, 1991}\else{\ifnum#1=
92{{\sl Proceedings of the Twenty Fourth Annual Symposium
on the Theory of Computing,\/} ACM, 1992}\else{\ifnum#1=
93{{\sl Proceedings of the Twenty Fifth Annual Symposium
on the Theory of Computing,\/} ACM, 1993}\else{\ifnum#1=
94{{\sl Proceedings of the Twenty Sixth Annual Symposium
on the Theory of Computing,\/} ACM, 1994}\else
This STOC not yet defined!
\fi}\fi}\fi}\fi}\fi}\fi}\fi}\fi}\fi}\fi}\fi}\fi}\fi}
\newcommand{\focs}[1]{\ifnum#1=
78{{\sl Proceedings of the Nineteenth Annual Symposium
on the Foundations of Computer Science,\/} IEEE, 1978}\else{\ifnum#1=
80{{\sl Proceedings of the Twenty First Annual Symposium
on the Foundations of Computer Science,\/} IEEE, 1980}\else{\ifnum#1=
82{{\sl Proceedings of the Twenty Third Annual Symposium
on the Foundations of Computer Science,\/} IEEE, 1982}\else{\ifnum#1=
85{{\sl Proceedings of the Twenty Sixth Annual Symposium
on the Foundations of Computer Science,\/} IEEE, 1985}\else{\ifnum#1=
86{{\sl Proceedings of the Twenty Seventh Annual Symposium
on the Foundations of Computer Science,\/} IEEE, 1986}\else{\ifnum#1=
87{{\sl Proceedings of the Twenty Eighth Annual Symposium
on the Foundations of Computer Science,\/} IEEE, 1987}\else{\ifnum#1=
88{{\sl Proceedings of the Twenty Ninth Annual Symposium
on the Foundations of Computer Science,\/} IEEE, 1988}\else{\ifnum#1=
89{{\sl Proceedings of the Thirtieth Annual Symposium
on the Foundations of Computer Science,\/} IEEE, 1989}\else{\ifnum#1=
90{{\sl Proceedings of the Thirty First Annual Symposium
on the Foundations of Computer Science,\/} IEEE, 1990}\else{\ifnum#1=
91{{\sl Proceedings of the Thirty Second Annual Symposium
on the Foundations of Computer Science,\/} IEEE, 1991}\else{\ifnum#1=
92{{\sl Proceedings of the Thirty Third Annual Symposium
on the Foundations of Computer Science,\/} IEEE, 1992}\else{\ifnum#1=
93{{\sl Proceedings of the Thirty Fourth Annual Symposium
on the Foundations of Computer Science,\/} IEEE, 1993}\else{\ifnum#1=
94{{\sl Proceedings of the Thirty Fifth Annual Symposium
on the Foundations of Computer Science,\/} IEEE, 1994}\else
This FOCS not yet defined!
\fi}\fi}\fi}\fi}\fi}\fi}\fi}\fi}\fi}\fi}\fi}\fi}\fi}
\newcommand{\crypto}[1]{\ifnum#1=
84{{\sl Advances in Cryptology -- Crypto~84 Proceedings,}
Lecture Notes in Computer Science Vol.~196, Springer-Verlag, B.~Blakley, ed.,
1985}\else{\ifnum#1=
85{{\sl Advances in Cryptology -- Crypto~85 Proceedings,}
Lecture Notes in Computer Science Vol.~218, Springer-Verlag, H.~Williams, ed.,
1985}\else{\ifnum#1=
86{{\sl Advances in Cryptology -- Crypto~86 Proceedings,}
Lecture Notes in Computer Science Vol.~263, Springer-Verlag, A.~Odlyzko, ed.,
1986}\else{\ifnum#1=
87{{\sl Advances in Cryptology -- Crypto~87 Proceedings,}
Lecture Notes in Computer Science Vol.~293, Springer-Verlag, C.~Pomerance, ed.,
1987}\else{\ifnum#1=
88{{\sl Advances in Cryptology -- Crypto~88 Proceedings,}
Lecture Notes in Computer Science Vol.~403, Springer-Verlag, S.~Goldwasser,
ed., 1989}\else{\ifnum#1=
89{{\sl Advances in Cryptology -- Crypto~89 Proceedings,}
Lecture Notes in Computer Science Vol.~435, Springer-Verlag, G.~Brassard, ed.,
1989}\else{\ifnum#1=
90{{\sl Advances in Cryptology -- Crypto~90 Proceedings,}
Lecture Notes in Computer Science Vol.~537, Springer-Verlag,
A.J.~Menezes and S.~Vanstone, ed., 1990}\else{\ifnum#1=
91{{\sl Advances in Cryptology -- Crypto~91 Proceedings,}
Lecture Notes in Computer Science Vol.~576, Springer-Verlag, J.~Feigenbaum,
ed., 1991}\else{\ifnum#1=
92{{\sl Advances in Cryptology -- Crypto~92 Proceedings,}
Lecture Notes in Computer Science Vol.~740, Springer-Verlag, E.~Brickell,
ed., 1993}\else{\ifnum#1=
93{{\sl Advances in Cryptology -- Crypto~93 Proceedings,}
Lecture Notes in Computer Science Vol.~773, Springer-Verlag, D.~Stinson,
ed., 1994}\else{\ifnum#1=
94{{\sl Advances in Cryptology -- Crypto~94 Proceedings}}\else
This Crypto not yet defined!
\fi}\fi}\fi}\fi}\fi}\fi}\fi}\fi}\fi}\fi}\fi}
\newcommand{\eurocrypt}[1]{\ifnum#1=
87{{\sl Advances in Cryptology -- Eurocrypt~87
Proceedings,} Lecture Notes in Computer Science Vol.~304, Springer-Verlag,
D.~Chaum, ed., 1987}\else{\ifnum#1=
88{{\sl Advances in Cryptology -- Eurocrypt~88
Proceedings,} Lecture Notes in Computer Science Vol.~330, Springer-Verlag,
C.~G.~Gunther, ed., 1988}\else{\ifnum#1=
89{{\sl Advances in Cryptology -- Eurocrypt~89
Proceedings,} Lecture Notes in Computer Science Vol.~434, Springer-Verlag,
J-J.~Quisquater, J.~Vandewille, ed., 1989}\else{\ifnum#1=
90{{\sl Advances in Cryptology -- Eurocrypt~90
Proceedings,} Lecture Notes in Computer Science Vol.~473, Springer-Verlag,
I.~Damg{\aa}rd, ed., 1990}\else{\ifnum#1=
91{{\sl Advances in Cryptology -- Eurocrypt~91
Proceedings,} Lecture Notes in Computer Science Vol.~547, Springer-Verlag
 1991}\else{\ifnum#1=
92{{\sl Advances in Cryptology -- Eurocrypt~92
Proceedings,} Lecture Notes in Computer Science Vol.~658, Springer-Verlag
 1992}\else{\ifnum#1=
93{{\sl Advances in Cryptology -- Eurocrypt~93
Proceedings}, 1993}\else{\ifnum#1=
94{{\sl Advances in Cryptology -- Eurocrypt~94
Proceedings}, 1994}\else
This Eurocrypt not yet defined!
\fi}\fi}\fi}\fi}\fi}\fi}\fi}\fi}
\newcommand{\structures}[1]{\ifnum#1=
88{{\sl Proceedings of the Third Annual Conference on
Structure in Complexity Theory\/}, IEEE, 1988}\else{\ifnum#1=
89{{\sl Proceedings of the Fourth Annual Conference on
Structure in Complexity Theory\/}, IEEE, 1989}\else{\ifnum#1=
90{{\sl Proceedings of the Fifth Annual Conference on
Structure in Complexity Theory\/}, IEEE, 1990}\else{\ifnum#1=
91{{\sl Proceedings of the Sixth Annual Conference on
Structure in Complexity Theory\/}, IEEE, 1991}\else{\ifnum#1=
92{{\sl Proceedings of the Seventh Annual Conference on
Structure in Complexity Theory\/}, IEEE, 1992}\else{\ifnum#1=
93{{\sl Proceedings of the Eighth Annual Conference on
Structure in Complexity Theory\/}, IEEE, 1993}\else{\ifnum#1=
94{{\sl Proceedings of the Ninth Annual Conference on
Structure in Complexity Theory\/}, IEEE, 1994}\else
This Structures not yet defined!
\fi}\fi}\fi}\fi}\fi}\fi}\fi}
\newcommand{\istcs}[1]{\ifnum#1=
93{{\sl Proceedings of the Second Israel Symposium on Theory
and Computing Systems\/}, 1993}\else{\ifnum#1=
95{{\sl Proceedings of the Third Israel Symposium on Theory
and Computing Systems\/}, 1995}\else
This ISTCS not yet defined!
\fi}\fi}
\def\|#1{\left| #1 \right|}
\def\boxfig#1
\newcommand{\case}[2]
{
\begin{description}
\item[{\sf \underline{Case:} #1: }]
#2
\end{description}
}
\newtheorem{theorem}{Theorem}
\newtheorem{lemma}[theorem]{Lemma}
\newtheorem{fact}[theorem]{Fact}
\newtheorem{claim}[theorem]{Claim}
\newtheorem{definition}[theorem]{Definition}
\def\qed{ \mbox{\ \vrule width0.6ex height1em depth0cm}
\global\advance\proofqeded by 1 }
\newenvironment{proof}%
 {\proofstart}%
  {\ifnum\proofqeded=\proofended\qed\fi \global\advance\proofended by 1
  \medskip}
\def\proofstart{\@ifnextchar[{\@oprf}{\@nprf}}
\def\@oprf[#1]{\paragraph{Proof of #1:}}
\def\@nprf{\paragraph{Proof:}}
\newcommand{\remove}[1]{}
\newcounter{problemct}
\def\calT{{\cal T}}
\newcommand{\ra}{\rightarrow}
\newcommand{\np}{\mbox{NP}}
\newcommand{\plurality}{\mbox{\rm plurality}}
\newcommand{\pfd}{P^{(f,d)}}
\newcommand{\pgd}{P^{(g,d)}}
\newcommand{\integers}{{\cal Z}^+}
\newcommand{\reals}{{\cal R}^+}
\def\appearsin#1{\gdef\@appearsin{#1}}
\def\maketitle{\par
 \begingroup
 \def\thefootnote{\fnsymbol{footnote}}
 \def\@makefnmark{\hbox
 to 0pt{$^{\@thefnmark}$\hss}}
 \if@twocolumn
 \twocolumn[\@maketitle]
 \else \newpage
 \global\@topnum\z@ \@maketitle \fi\thispagestyle{plain}\@thanks
 \endgroup
 \setcounter{footnote}{0}
 \let\maketitle\relax
 \let\@maketitle\relax
 \gdef\@thanks{}\gdef\@author{}\gdef\@title{}\gdef\@appearsin{}
          \let\thanks\relax}
\def\@maketitle{\newpage
 \noindent \@appearsin
 \vskip 1in \begin{center}
 {\LARGE \@title \par} \vskip 1.5em {\large \lineskip .5em
\begin{tabular}[t]{c}\@author
 \end{tabular}\par}
 \vskip 1em {\normalsize \@date} \end{center}
 \par
 \vskip 1.5em}
\def\abstract{\if@twocolumn
\section*{Abstract}
\else \small
\begin{center}
{\bf Abstract\vspace{-.5em}\vspace{0pt}}
\end{center}
\quotation
\fi}
\def\endabstract{\if@twocolumn\else\endquotation\fi}
\newcommand{\polylog}{\mbox{\rm polylog}}
\newcommand{\corr}{{\sf Corr}}
\newcommand{\row}{{\sf row}}
\newcommand{\col}{{\sf col}}
\begin{document}

\title{{\LARGE\bf Some Improvements to Total Degree Tests}
\thanks{A version of this paper appeared in {\em Proceedings of
the 3rd Israel Symposium on Theory of Computing and Systems},
Tel Aviv, Israel,
January 4-7, 1995. This version corrects a few typographical
errors.}
\protect\vspace{10pt}}

\author{
{\sc Katalin Friedl}\thanks{\
Computer and Automation Research Institute, Hungarian
Academy of Sciences.
e-mail: {\tt kati@ilab.\allowbreak sztaki.hu.}.
Most of this work was done while at the Department of Computer
Science,
University of Chicago.  Supported in part by OTKA Grant 2581.} \and
{\sc Madhu Sudan}\thanks{\
Research Division,
IBM T.J.\ Watson Research Center, P.O.~Box 218,
Yorktown Heights, NY~10598, USA.
e-mail: {\tt madhu@watson.\allowbreak ibm.com}.
\protect\vspace{1in}}\protect\vspace{10pt}}

\date{}

\maketitle
\thispagestyle{empty}
\begin{abstract}

A low-degree test is a collection of simple, local rules
for checking
the proximity of an arbitrary function to a low-degree polynomial.  Each
rule depends on the function's values at a small number of places.
If a function satisfies many rules then it is close to a low-degree
polynomial. Low-degree tests play an important role in the development of
probabilistically checkable proofs.

In this paper we present two improvements to the efficiency of low-degree
tests. Our first improvement concerns the smallest field size over which
a low-degree test can work. We show how to test that a function is a
degree $d$ polynomial over prime fields of size only $d+2$.

Our second
improvement shows a better efficiency of the low-degree test of
\cite{RS} than previously known. We show concrete
applications of this improvement via the notion of ``locally checkable
codes''. This improvement translates into better tradeoffs
on the size versus probe complexity of probabilistically checkable proofs
than previously known.

\end{abstract}

\section{Introduction}

In this paper we consider functions mapping $m$ variables from some
finite field $F$ to the same field. Let the distance between two
functions $f$ and $g$, denoted by $d(f,g)$ be $\Pr_{x \in F^m} [ f(x) \ne
g(x)]$.
We use $\deg(f)$ to denote the {\em total
degree} of $f$, and $\deg_{\rm max} (f)$ to denote the largest
individual degree in any of the variables in $f$.

The low-degree testing problem for total degree (maximum
degree) is defined as follows:
\begin{definition}
For parameters $d \in \integers$ and $\delta,\epsilon \in \reals$, a
{\sf low-degree tester} is probabilistic oracle machine
$\calT$, that has access to a function $f:F^m \to F$ as an
oracle, and behaves as follows:
\begin{itemize}
\item If $\deg(f) \leq d$ ($\deg_{\rm max}(f) \leq d$) then $T^f$ accepts.
\item If for all total degree (maximum degree) $d$ polynomials $g$, $d(f,g) >
\epsilon$, then $T$ rejects with probability $\delta$.
\end{itemize}
\end{definition}
The low-degree testing problem has been studied widely due to their
relationship with probabilistically checkable (holographic) proofs and
program checking.
\cite{BFL,BFLS,FGLSS,AS,FHS,PS} study the case of testing the maximum
degree and \cite{BLR,GLRSW,RS,ALMSS} study the case of testing total
degree\footnote{Some of the improvements in the former family
have also affected the latter. In particular \cite{ALMSS}
obtain their improvement using the improved analysis of
\cite{AS}. Similarly the work of \cite{PS} also affects the
latter family of testers.}.
Our improvements are to the latter family of testers.
We start by describing their testers.
\begin{definition}
For points $x,h \in F^m$,
the {\em line} through $x$
with offset $h$
is the set of points $l_{x,h} = \{l_{x,h} (t) = x
+ t \cdot h | t \in F\}$.
\end{definition}
\begin{definition}
Given a function $f: F^m \to F$, a positive integer $d$
and points $x,h \in F^m$,
the line polynomial $\pfd_{x,h} : F \to F$ is a univariate
polynomial of degree at most $d$ which satisfies
$\pfd_{x,h}(t) = f(x + t \cdot h)$ for the most $t$. {\em
Ties are broken arbitrarily.}
\end{definition}
The tester in \cite{RS} is effectively the following: ``Pick $x$,
$h$ randomly and uniformly from $F^m$ and $t$ randomly from
$F$ and verify that $\pfd_{x,h}(t) = f(x + t \cdot h)$.''
For our purposes it is not important how $\pfd$ is computed by the
tester. This will become clear in the context of our applications.
\remove{
Notice that if $f$ is a
polynomial then $\pfd_{x,h}$ can be easily interpolated by
sampling $f$ on $O(d)$ points of the form
$\{x + t_i h\}_{t_i \in F}$. In the case that $f$ is not close
to a polynomial then this function does not remain easy to
compute. Here, they continue to use a polynomial interpolated
from the value of $f$ on the set $\{x + t_i h\}_{t_i \in F}$.
and use the observation that for a given
$x,h$ pair, of all the degree $d$ polynomials $P_{x,h}:F \to F$,
the polynomial $\pfd_{x,h}$ is the one most likely to pass the
test $\pfd_{x,h}(t) = f(x + t \cdot h)$. Thus }
The
correctness of the tester is proved by the following kind of a statement.

{\bf Informal Statement:} {\em
If $|F|$ is a sufficiently large function of $d$ and $\delta$ is a
sufficiently small function of $d$, then
given a function $f:F^m \to F$, if there exists a set of degree
$d$ polynomials $\{P_{x,h}\}$ which satisfies
$$
\Pr_{x,h,t} \left[ P_{x,h} (t) \ne f(x + t\cdot h) \right]
\leq \delta
$$
then there exists a degree $d$ polynomial $g:F^m \to F$ such
that $d(f,g) \leq 2\delta$.
}

The above statement does not specify the conditions on $|F|$ and
$\delta$, and determining the exact conditions on these
parameters turn out to be the interesting aspect in the
analysis of low-degree tests. The improvements noted in this
paper apply to these two parameters.

\subsection{Reducing Field Size}

The motivation for the following theorem is primarily one of
curiosity. The smallest field size over which polynomials of a
given total degree exhibit sufficient ``redundancy'' to, say,
enable the application of the Schwartz-Zippel like theorems
\cite{Schwartz,Zippel}, is when the field size is at least
$d+2$. The low-degree tester of \cite{GLRSW} uses
sets of the same size, i.e., $d+2$,
as elementary test sets.
Their proof
manages to show that in a certain sense (see Lemma~\ref{robust})
fields of size $d+2$ are sufficient to show some sort of
robustness. However their proof falls short of showing
low-degree tests that work over fields of size $d+2$ because
of the  lack of an ``exact characterization'' (in the sense of
\cite{RS:jv}). We complement their work by providing an
exact characterization of low-degree polynomials, which shows
that their tester is good for prime fields of size $d+2$, and
improves the bound for non-prime fields as well. We give
examples to show that our characterizations are essentially
the best possible.
\begin{lemma}
If $q$, the order of $F$, and $p$, its characteristic, satisfy $q - q/p -
1 \geq d$ and $g : F^m \to F$ satisfy$$\pgd_{x,h} (t) = g(x + t
\cdot h) \mbox{ for all } x,h,t $$
then $g$ is a degree $d$ polynomial.
\end{lemma}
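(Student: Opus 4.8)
The plan is to bound the total degree of the canonical polynomial representative of $g$ — the unique polynomial of individual degrees at most $q-1$ representing the function $g$; since replacing $x_i^q$ by $x_i$ never raises total degree, it suffices to bound the total degree of this representative. First observe that the hypothesis ``$\pgd_{x,h}(t)=g(x+t\cdot h)$ for all $t$'' says exactly that, for every $x,h$, the univariate function $t\mapsto g(x+t\cdot h)$ agrees with a polynomial of degree at most $d$. Taking $h=e_i$ shows that, for every fixing of the remaining coordinates, $g$ is a polynomial function of $x_i$ of degree at most $d$; as two polynomials of degree $<q$ that agree as functions on $F$ must coincide and $d\le q-2$, the canonical representative has each individual degree at most $d$. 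Writing $q=p^{\ell}$, so $q/p=p^{\ell-1}$, the hypothesis of the lemma reads $d\le q-p^{\ell-1}-1$, and in particular $d\le q-2$.

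The core of the argument uses the expansion $g(x+t\cdot h)=\sum_{k\ge 0}t^{k}G_{k}(x,h)$, where $G_{k}(x,h)=\sum_{|\beta|=k}h^{\beta}\,(\partial^{(\beta)}g)(x)$ is the coefficient of $t^{k}$; here $\partial^{(\beta)}$ is the Hasse derivative (hyperderivative), characterised by $g(x+y)=\sum_{\beta}y^{\beta}(\partial^{(\beta)}g)(x)$. Since $g$ has individual degrees at most $d$, the derivative $\partial^{(\beta)}g$ vanishes identically as soon as some $\beta_i>d$; hence each $G_k$, as a polynomial in $(x,h)$, has all individual degrees at most $d$, and it is homogeneous of degree $k$ in $h$. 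Now fix $x,h\in F^m$ and reduce the $t$-polynomial $\sum_k t^k G_k(x,h)$ modulo $t^q-t$: the reduced polynomial has $t$-degree $<q$ and agrees, as a function of $t\in F$, with $g(x+t\cdot h)$, hence with the degree-$\le d$ polynomial supplied by the hypothesis, hence equals it. Reading off the coefficient of $t^{r}$ for $d<r\le q-1$ gives $\sum\{\,G_{k}(x,h):k\ge 1,\ k\equiv r\bmod(q-1)\,\}=0$; since this holds for every $x,h$ and the left-hand side has individual degrees $<q-1$, it vanishes as a polynomial, and grouping its terms first by total $h$-degree (they are homogeneous of the distinct degrees $r,\,r+(q-1),\,r+2(q-1),\dots$ in $h$) and then by monomials in $h$ yields $\partial^{(\beta)}g\equiv 0$ for every $\beta$ with $|\beta|\ge d+1$ and $|\beta|\bmod(q-1)\in\{d+1,\dots,q-1\}$ (reading the residue $0$, i.e.\ $(q-1)\mid|\beta|$, as $q-1$).

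It remains to kill the orders $k\ge d+1$ with $k\bmod(q-1)\in\{1,\dots,d\}$, which are necessarily $\ge q$. I would argue by strong induction on $k$, using the composition rule $\partial^{(\gamma)}\partial^{(\delta)}=\binom{\gamma+\delta}{\gamma}\partial^{(\gamma+\delta)}$: given $\beta$ with $|\beta|=k$ and all $\beta_i\le d$, it is enough to exhibit a nonzero multi-index $\gamma$ whose base-$p$ digits are dominated by those of $\beta$ (so $\binom{\beta}{\gamma}\not\equiv 0\pmod p$ by Lucas) with $|\gamma|\le k-d-1$, for then $\binom{\beta}{\gamma}\,\partial^{(\beta)}g=\partial^{(\gamma)}\bigl(\partial^{(\beta-\gamma)}g\bigr)=0$ since $|\beta-\gamma|\ge d+1$ is a strictly smaller order, already handled. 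If some $\beta_i$ is not divisible by $p$ take $\gamma=e_i$ (legitimate as $k\ge q\ge d+2$); otherwise the smallest admissible weight is $p^{a}$ with $a=\min\{v_p(\beta_i):\beta_i\ne 0\}$, and $p^{a}\mid\beta_i\le d<q=p^{\ell}$ forces $a\le\ell-1$, while $q-q/p-1\ge d$ gives $k-d-1\ge q-d-1\ge q/p=p^{\ell-1}\ge p^{a}$, as required. Finally, once $\partial^{(\beta)}g=0$ for all $\beta$ with $|\beta|\ge d+1$ (for $\beta$ with some $\beta_i>d$ this is trivial by the individual-degree bound), the coefficient of $x^{\alpha}$ in $g$ is the constant term of $\partial^{(\alpha)}g$ and hence vanishes whenever $|\alpha|\ge d+1$, so $\deg(g)\le d$.

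The step I expect to be the main obstacle is this last bootstrapping: over a non-prime field $d$ may exceed $p$, so one cannot always lower $|\beta|$ by a single unit while keeping the relevant binomial coefficient nonzero mod $p$, and one must instead peel off an appropriate $p$-power digit — checking that a sufficiently small such digit always exists is exactly the point at which the sharp threshold $q-q/p-1$ is used. (Over a prime field $q=p$ the threshold reads $d\le p-2$, every nonzero $\beta_i$ is then a unit mod $p$, and the bootstrapping collapses to the trivial ``decrease $|\beta|$ by one'' induction.)
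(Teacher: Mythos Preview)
Your argument is correct, and it takes a genuinely different route from the paper's. The paper proceeds by induction on the number of variables $m$: it takes a highest-degree monomial $x_1^{i_1}\cdots x_m^{i_m}$ with $\sum i_j>d$ and performs a case analysis, either dropping to $m-1$ variables (when some $(m-1)$-subsum of the $i_j$ already exceeds $d$) or, in the hard case $\sum i_j\ge q$, substituting $x_j=a_jt$ for $j<m$ and $x_m=b+t$ and reading off the coefficient of $t^r$ for a carefully chosen $r$ of the form $np^{s-1}+\sum_{j<m}i_j$ in $[d+1,q]$; the key arithmetic input is the ad hoc fact that $\binom{n}{kp^{s-1}}\not\equiv 0\pmod p$ whenever $0<kp^{s-1}\le n<p^s$. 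Your proof is dimension-free: you work with Hasse derivatives, kill all orders $k\in\{d+1,\dots,q-1\}$ in one stroke by reducing $\sum_k t^kG_k(x,h)$ modulo $t^q-t$ and separating $h$-homogeneous components, and then bootstrap to orders $k\ge q$ via the composition rule $\partial^{(\gamma)}\partial^{(\delta)}=\binom{\gamma+\delta}{\gamma}\partial^{(\gamma+\delta)}$ together with Lucas' theorem. Both proofs pivot on the same arithmetic phenomenon (the bound $q-d-1\ge q/p$ guarantees a binomial coefficient that survives mod $p$), but your packaging avoids the case split on $m$ and makes the role of that bound more transparent; the paper's version, in exchange, is self-contained and does not presuppose familiarity with Hasse derivatives.
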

We use the above statement in combination with the following
Lemma from \cite{GLRSW,RS:jv}, to get
Theorem~\ref{d+2theorem}.
\begin{lemma}[\cite{GLRSW,RS:jv}]
\label{robust}
There exists a constant $c$ such that if $|F| \geq d+2$ and $\delta
\leq {1 \over c(d+1)^2}$ and $P$ and $f$ satisfy
$$
\Pr_{x,h,t} \left[
P_{x,h} (t) \ne f(x + t\cdot h) \right]
\leq \delta
$$
then there exists a function $g: F^m \to F$ such that $d(f,g) \leq 2
\delta$ and $g$ satisfies
$$\pgd_{x,h} (t) = g(x + t \cdot h) \mbox{ for all } x,h
\in F^m \mbox{ and } t \in F \mbox{.}
$$
\end{lemma}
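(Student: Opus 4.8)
The plan is to obtain $g$ by \emph{self-correcting} $f$: I define $g$ through a plurality vote over the line restrictions, verify that it is $2\delta$-close to $f$, and then prove the substantive part, that $g$ restricted to every line has degree at most $d$ (equivalently $\pgd_{x,h}(t)=g(x+t\cdot h)$ for all $x,h,t$). Before starting I would replace each given polynomial $P_{x,h}$ by the best degree-$\le d$ fit $\pfd_{x,h}$: for every $(x,h)$ one has $\Pr_t[\pfd_{x,h}(t)\ne f(x+t\cdot h)]\le\Pr_t[P_{x,h}(t)\ne f(x+t\cdot h)]$, so the hypothesis only improves, and (choosing a tie-breaking rule that respects it) the new collection depends on the line $l_{x,h}$ only as a set, in the sense that $\pfd_{x+t_0h,\,h}(t)=\pfd_{x,h}(t+t_0)$; this equivariance is used below and is \emph{not} available for an arbitrary collection. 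Since $l_{x,h}(0)=x$, I then put $g(x):=$ the value $v$ maximizing $\Pr_{h\in F^m}[\pfd_{x,h}(0)=v]$, ties broken arbitrarily.

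\noindent\textbf{Closeness.} Fix $h$ and average the indicator $[\,\pfd_{x,h}(0)\ne f(x)\,]$ over the substitution $x\mapsto x+t_0h$ (which preserves the uniform law of $x$); by equivariance $\pfd_{x+t_0h,h}(0)=\pfd_{x,h}(t_0)$, so $\Pr_x[\pfd_{x,h}(0)\ne f(x)]=\Pr_{x,t}[\pfd_{x,h}(t)\ne f(x+t\cdot h)]$. Averaging over $h$,
$$\Pr_{x,h}\bigl[\pfd_{x,h}(0)\ne f(x)\bigr]=\Pr_{x,h,t}\bigl[\pfd_{x,h}(t)\ne f(x+t\cdot h)\bigr]\le\delta.$$
By Markov's inequality all but a $2\delta$-fraction of $x$ satisfy $\Pr_h[\pfd_{x,h}(0)\ne f(x)]<1/2$, and for each such $x$ the winning plurality value is $f(x)$, so $g(x)=f(x)$; hence $d(f,g)\le2\delta$.

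\noindent\textbf{The degree property (the crux).} I would establish this by passing to random two-dimensional planes. Since the line test is passed on all but an $O(\delta)$-fraction of lines, on a random plane $\Pi$ all but a small fraction of the lines of $\Pi$ are ``good'': their best-fit polynomial agrees with $f$ on a $1-O(1/d)$-fraction of the line and is therefore the \emph{unique} degree-$\le d$ polynomial doing so. First (\emph{pairwise consistency}): for random $x$ and independent $h_1,h_2$, I restrict $f$ and the line polynomials to the plane $\Pi$ through $x$ spanned by $h_1,h_2$ and run a robust bivariate reconstruction --- interpolate a degree-$\le d$ bivariate polynomial $Q_\Pi$ from the good lines of $\Pi$, then certify via a Schwartz--Zippel-type count (this is where $|F|\ge d+2$ is used) that $Q_\Pi$ agrees with $f$, hence with the polynomial of every good line of $\Pi$, in particular those on $l_{x,h_1}$ and $l_{x,h_2}$ --- which yields $\pfd_{x,h_1}(0)=\pfd_{x,h_2}(0)$ except with probability $O(d\delta)$; the quadratic loss enters exactly here, since the reconstruction tolerates per-line error only of order $\delta$ and aggregates over $\Theta(d)$ lines, so it is valid precisely when $\delta=O(1/(d+1)^2)$. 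Second (\emph{overwhelming plurality}): a standard second-moment (``transitivity of plurality'') argument upgrades this to: for all but a small fraction of $x$, $\Pr_h[\pfd_{x,h}(0)=g(x)]$ is close to $1$. Third: given an \emph{arbitrary} line $\ell$ I pick a random auxiliary offset and work in the plane $\Pi\supset\ell$ it spans; the same bivariate reconstruction produces, with positive probability over the offset, a degree-$\le d$ bivariate polynomial $Q_\Pi$, and the overwhelming-plurality property together with equivariance forces $Q_\Pi$ to agree with $g$ at every point of $\Pi$, so $g|_\ell=Q_\Pi|_\ell$ has degree $\le d$. As $\ell$ was arbitrary, $\pgd_{x,h}(t)=g(x+t\cdot h)$ for all $x,h,t$.

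\noindent\textbf{Main obstacle.} The heart of the argument, and the source of both the field-size requirement $|F|\ge d+2$ and the constant $c$ in $\delta\le 1/(c(d+1)^2)$, is the robust reconstruction of a bivariate degree-$\le d$ polynomial on a random plane from line data that is only \emph{approximately} consistent, together with the exact characterization of bivariate low-degree polynomials over fields of size $\ge d+2$ on which that reconstruction rests (a two-dimensional analogue of the exact characterizations of low-degree polynomials discussed elsewhere in this paper). This is exactly the portion carried out in detail in \cite{RS:jv,GLRSW}; the self-correction, the closeness bound, and the plurality bookkeeping around it are routine.
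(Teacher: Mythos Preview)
The paper does not prove Lemma~\ref{robust}; it is imported verbatim from \cite{GLRSW,RS:jv} and then combined with the new exact characterization (Theorem~\ref{totaldegree}) to obtain Theorem~\ref{d+2theorem}. So there is no in-paper proof to compare your sketch against.

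That said, your outline mixes two distinct proof strategies. The argument actually in \cite{GLRSW,RS:jv} does \emph{not} use bivariate reconstruction on planes; it works directly with the $(d{+}2)$-point interpolation identity $\sum_{i=0}^{d+1}\alpha_i f(x+t_ih)=0$, and all three stages (pairwise consistency, overwhelming plurality, and the passage to an arbitrary $x,h$) are handled by random-offset averaging together with a union bound over the $d{+}2$ evaluation points. That is precisely where the $1/(c(d{+}1)^2)$ threshold comes from: one factor of $d{+}2$ is lost going from the line hypothesis to the identity test, and another in the union bound across the $d{+}2$ points; the condition $|F|\ge d{+}2$ is used only to have $d{+}2$ distinct abscissae available. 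The bivariate-reconstruction machinery you invoke is the Arora--Safra/Polishchuk--Spielman approach, which is what the \emph{present} paper deploys in Section~3 to prove the much stronger Theorem~\ref{main theorem}; it does not naturally yield a $1/(d{+}1)^2$ threshold, and it is not what \cite{GLRSW,RS:jv} do.

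There is also a soft spot in your ``Third'' step. Once the target line $\ell=l_{x_0,h_0}$ is fixed and you choose one random auxiliary offset, the lines of $\Pi$ parallel to $\ell$ all share the \emph{fixed} direction $h_0$; the hypothesis $\Pr_{x,h,t}[\cdots]\le\delta$ averages over directions and gives you no control over this family, so you cannot simply reuse ``the same bivariate reconstruction''. The \cite{GLRSW,RS:jv} proof sidesteps this by never needing a good line through $\ell$ itself: it samples auxiliary offsets so that the $d{+}2$ lines through the points $x_0+t_ih_0$ are individually random, applies overwhelming plurality at each of those points, and reads off the interpolation identity along $\ell$ directly.
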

\begin{theorem}
\label{d+2theorem}
Let $q$ denote the order of the field $F$ and $p$ its
characteristic. Then there exists a constant $c$ such
that if $q - q/p - 1 \geq d$ and $\delta \leq {1 \over
c(d+1)^2}$ and $P$ and $f$ satisfy
$$
\Pr_{x,h,t} \left[ P_{x,h} (t) \ne f(x + t\cdot h) \right] \leq \delta
$$
then there exists a degree $d$ polynomial $g : F^m \to F$ such
that $d(f,g) \leq 2\delta$.
\end{theorem}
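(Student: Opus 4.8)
The plan is to simply \emph{compose} the two lemmas stated above, with one small arithmetic observation bridging their hypotheses; the theorem is really just the clean packaging of these two ingredients.

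First I would observe that the hypothesis $q - q/p - 1 \geq d$ of \thref{d+2theorem} is at least as strong as the hypothesis $|F| \geq d+2$ needed by \lemref{robust}: writing $q = p^k$ we have $q/p = p^{k-1} \geq 1$, so $q \geq d + 1 + q/p \geq d+2$. Hence, choosing the constant $c$ in \thref{d+2theorem} to be exactly the constant supplied by \lemref{robust}, the hypotheses of \thref{d+2theorem} (namely $q-q/p-1 \geq d$, $\delta \leq \frac{1}{c(d+1)^2}$, and $\Pr_{x,h,t}[P_{x,h}(t) \ne f(x+t\cdot h)] \leq \delta$) imply in particular that \lemref{robust} applies to the given $P$ and $f$.

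Applying \lemref{robust} then yields a function $g : F^m \to F$ with $d(f,g) \leq 2\delta$ satisfying $\pgd_{x,h}(t) = g(x + t\cdot h)$ for all $x, h \in F^m$ and $t \in F$. Now I would feed this $g$ into the first (unnamed) Lemma of the paper --- the ``exact characterization'': its hypotheses are precisely $q - q/p - 1 \geq d$ together with the line-polynomial identity $\pgd_{x,h}(t) = g(x + t \cdot h)$ holding for \emph{all} $x,h,t$, both of which we have just secured. That Lemma concludes that $g$ is a degree $d$ polynomial. Combining this with $d(f,g) \leq 2\delta$ gives exactly the conclusion of \thref{d+2theorem}.

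I do not expect any real obstacle in this deduction: all the difficulty is hidden inside the two cited lemmas. The only points requiring (minimal) care are the arithmetic check that $q - q/p - 1 \geq d$ forces $|F| \geq d+2$, so that \lemref{robust} is legitimately invoked, and the bookkeeping that a single choice of the absolute constant $c$ --- that of \lemref{robust} --- works for the whole statement. The genuinely hard work, namely establishing the robustness/self-correction step (\lemref{robust}, taken from \cite{GLRSW,RS:jv}) and proving the exact characterization of total-degree-$d$ polynomials over fields with $q - q/p - 1 \geq d$, is carried out separately; \thref{d+2theorem} is their immediate corollary.
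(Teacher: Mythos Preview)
Your proposal is correct and matches the paper's approach exactly: the paper presents \thref{d+2theorem} as the immediate combination of the exact-characterization lemma with \lemref{robust}, without even writing out a separate proof. Your explicit check that $q - q/p - 1 \geq d$ forces $|F| = q \geq d+2$ (so that \lemref{robust} applies) is the only detail the paper leaves implicit, and you have supplied it correctly.
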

In Section~\ref{exact} we also show that the requirement on
$|F|$ is the tightest possible in the following sense: For all
$p,q,d$ such that $d > q - q/p -1$, we show that there exist
functions $g : F^m \to F$ and $P_{x,h} : F \to F$, such that
$\deg(P_{x,h}) \leq d$ and for all $x,h \in F^m, t \in F$,
$P_{x,h}(t) = g(x + t\cdot h)$, but $\deg(g) > d$.

\subsection{Improving the efficiency}

Improving the second of the two parameters in the statement of the
Informal Statement is a task of greater significance. (Here an
improvement would imply a larger value of $\delta$.)
The result in \cite{GLRSW} shows that the test works for $\delta \leq
O(1/d^2)$. The improvements in \cite{RS} and \cite{ALMSS} yielded
$\delta \leq O(1/d)$ and $\delta \leq \delta_0$ for some
$\delta_0 > 0$ respectively. The constant $\delta_0$ coming from the
latter analysis is
not described explicitly but the number appears to be
fairly small. Here we show that the theorem works for any $\delta < 1/8$.
More precisely,
\begin{theorem}
\label{What}
For every $\epsilon > 0$, there exist
$c < \infty$ such that for all $d$, if $|F| \geq cd$ the following holds.
Given a function $f:F^m \to F$ and degree $d$ polynomials
$\{P_{x,h}\}$ satisfying
$$ \Pr_{x,h,t} \left[ P_{x,h} (t) \ne f(x + t \cdot h) \right]  =
\delta \leq 1/8 - \epsilon$$
there exists a degree $d$ polynomial $g$ such that $d(f,g) \leq
2\delta$.
\end{theorem}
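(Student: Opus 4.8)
The plan is to exhibit $g$ explicitly as a self-corrected version of $f$ and then prove $d(f,g)\le 2\delta$ and $\deg(g)\le d$ separately, the first being easy and the second carrying all the content. Set $g(x)=\plurality_{h\in F^m}\{\pfd_{x,h}(0)\}$, ties broken arbitrarily. Since $\pfd_{x,h}$ depends only on the line $l_{x,h}$ and not on its parametrization, $\pfd_{x,h}(0)$ is the value that the best degree-$d$ fit to $f$ on the line through $x$ assigns at $x$; and for uniform $x,h$ the pair (line $l_{x,h}$, distinguished point $x$) has the same distribution as (uniform line, uniform point on it). Hence $\Pr_{x,h}[\pfd_{x,h}(0)\ne f(x)]$ equals the rejection probability $\delta$. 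If now $g(x)\ne f(x)$, at most half the directions $h$ can have $\pfd_{x,h}(0)=f(x)$, so at least half have $\pfd_{x,h}(0)\ne f(x)$; by Markov's inequality such $x$ form a set of measure $\le 2\delta$, giving $d(f,g)\le 2\delta$. This step is insensitive to the size of $\delta$.

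\textbf{Reducing the degree bound to the lines of $g$.}
It remains to prove $\deg(g)\le d$. For $c$ large enough $|F|\ge cd$ forces $q-q/p-1\ge d$, so by the exact characterization stated above (the unlabeled Lemma preceding Lemma~\ref{robust}, asserting that $\pgd_{x,h}(t)=g(x+t h)$ for all $x,h,t$ implies $\deg(g)\le d$) it is enough to show that $g$ restricted to every line is a univariate polynomial of degree $\le d$; equivalently, for every line $L$ the $|F|$ values $\{g(z):z\in L\}$ lie on a single degree-$d$ curve.

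\textbf{The consistency argument, and where $1/8$ enters.}
To control $g$ on a fixed line $L=l_{x_0,h_0}$ I would pass a random plane through it: pick $h'\in F^m$ uniformly, put $\Pi=\{x_0+t h_0+s h':s,t\in F\}$, and study the table $A(t,s)=g(x_0+t h_0+s h')$. For $s\ne0$ the row $t\mapsto A(t,s)$ is $g$ along a line whose base point $x_0+s h'$ is uniform, and similarly for columns, so combining $d(f,g)\le 2\delta$ with the bookkeeping above, most rows and columns agree with $f$, and with the relevant line polynomials, off a small fraction of their points. A Schwartz--Zippel / sampling step -- where the hypothesis $|F|\ge cd$ buys the room -- then forces $A$ to agree, off a controlled set, with a genuine bivariate polynomial $Q$ of degree $\le d$ in each variable; and since two such polynomials that agree on more than a $d/|F|$ fraction of a line coincide on it, the row $s=0$ of $A$, which is $g|_L$, must equal the degree-$\le d$ row $Q(\cdot,0)$. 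The auxiliary inputs here (that $g$ agrees with line polynomials on most lines, and that $Q$ exists for most planes) would be obtained by first running the argument for a \emph{random} $L$, using only what is already proved, and then promoting it to every $L$ using that $g$ is globally $2\delta$-close to $f$. The threshold $\delta<1/8$ surfaces precisely here: the plane argument needs only a \emph{constant} number of bad events -- the test rejecting on one of $O(1)$ relevant lines of $\Pi$, or $f\ne g$ at one of $O(1)$ relevant points -- to be absent simultaneously for a majority/pigeonhole count over the rows and columns of $\Pi$ to go through, which amounts to roughly $4\delta<1/2$; the slack $\epsilon$ absorbs the residual $O(d/|F|)$ error from Schwartz--Zippel.

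\textbf{The main obstacle.}
The crux is organizing the plane argument so that it hinges on $O(1)$ bad events rather than $\Theta(d)$ of them (one per interpolation point) -- the latter being exactly what caps the analysis of \cite{RS} at $\delta=O(1/d)$. This means the combinatorics must go through a robust ``a majority of a small sample is good'' step in place of an ``all $d+1$ chosen lines are good'' step. A secondary difficulty is that a particular line may point along a direction on which the test rejects atypically often, so one cannot reason about a fixed $L$ directly; this is what forces the detour through random $L$ and the bootstrap from the already-established proximity of $g$ to $f$ and (inductively) to a low-degree polynomial.
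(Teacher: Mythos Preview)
Your plan diverges from the paper at the crucial step, and the divergence is a genuine gap rather than an alternative route. You propose to show that $g=\corr_f$ has $\deg(g)\le d$ \emph{directly}, by proving $g$ is degree $\le d$ on every line and invoking the exact characterization. The paper does \emph{not} attempt this; instead it proves the weaker Lemma~\ref{main lemma}, that $\delta_{\corr_f}<\delta_f$, and then inducts on the (finitely many) possible values of $\delta$: replacing $f$ by $\corr_f$ repeatedly drives $\delta$ to zero, at which point the exact characterization applies, and then Lemmas~\ref{1/4th} and~\ref{1/2th} are used to unwind the induction and conclude $\corr_f$ itself is the polynomial. Your ``promote from random $L$ to every $L$'' step is exactly where this difference bites: you appeal to the proximity of $g$ to $f$ ``and (inductively) to a low-degree polynomial'', but you have not set up any induction, and proximity to $f$ alone says nothing about a fixed worst-case line. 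There is no evident way to force $g$ to be exactly degree $\le d$ on \emph{every} line in one shot when $\delta$ is a constant.

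There is a second, more technical gap. Your flat plane $\Pi=\{x_0+th_0+sh'\}$ has all rows sharing the fixed direction $h_0$ and all columns sharing the single random direction $h'$; neither family is a collection of pairwise independent uniform lines, so you cannot run a second-moment argument on them. The paper avoids this by working not with a plane but with the bilinear surface $\{x+ih_1+jh_2+ijh_3\}$: here row $i$ is the line $l_{x+ih_1,\,h_2+ih_3}$, and distinct rows are genuinely pairwise independent uniform lines (likewise columns). This is what lets Chebyshev replace the union bound and is precisely the mechanism that lifts the threshold from $O(1/d)$ to a constant; your ``$O(1)$ bad events'' heuristic does not substitute for it. Once pairwise independence gives $\Pr_{i,j}[r_i(j)\ne c_j(i)]\le 2(1/8-\epsilon/2)=1/4-\epsilon$ with high probability, Lemma~\ref{strengthen} (a sharpening of the Polishchuk--Spielman bivariate lemma) produces $Q$, and that yields $\pfd_{x,h_1}(0)=\pfd_{x,h_2}(0)$ with probability $1-O(\delta_f/|F|)$, which is exactly $\delta_{\corr_f}<\delta_f$.
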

{\bf Remark:}
The bound on the field size in the above theorem is also
better than that of \cite{ALMSS} who are only able to show it
for $|F| \geq d^3$. However, this improvement can already be
inferred in the work of \cite{PS}. In fact, our analysis inherits
this particular improvement from their analysis.

\subsection{An Application}

The second theorem given above turns out to have some implication to
the properties of probabilistically checkable proofs. In particular
it implies that the proofs constructed in \cite{ALMSS} have a much
smaller probe complexity than shown earlier. It turns out that
all the known construction of holographic proof inherits part of
their properties from the properties of some underlying
error-correcting code.
It is easiest to describe the effect of our analysis in terms
of the improvement in the properties of the codes created in
\cite{ALMSS,RS:jv}.
The following definitions are from \cite{RS:jv}.

An {\sf $(k,n,d,a)$-code} consists of an alphabet $\Sigma$ such that $
\log |\Sigma| = a$ and a function $C : \Sigma^k \ra
\Sigma^n$, such that for any two strings $m, m' \in \Sigma^k$, the
(Hamming) distance between $C(m)$ and $C(m')$ is at least $d$.
For applications to probabilistically checkable proofs, it is
sufficient to consider codes restricted a small range
of these parameters. We call these the
{\em good} codes. Such codes need to have constant relative distance.
The encoded message is allowed to be much larger than the original
message size, as long as the final length is polynomially bounded.
\begin{definition}
[Good Code]
A family of codes $\{C_i\}$ with parameters $(k_i,n_i,d_i,a_i)$
is good if $k_i \rightarrow \infty$,
$n_i$ is upper bounded by some polynomial in $k_i$,
$d_i/n_i > 0$ and $a_i = \polylog(k_i)$.
\end{definition}
For the application to probabilistically checkable proofs, the
primary question is the following: ``Does the code admit {\em
very simple} randomized error detection?''.
This notion is formalized in the next definition. Intuitively
the definition says that the error-detection can be done by
probing just $p$ letters of a word to get a confidence
$\delta$ that it is close to some codeword.
\begin{definition}
For a positive integer $p$
and a positive real number $\delta$,
an $(n,k,d,a)$-code $C$ over the alphabet $\Sigma$
is {\bf $(p,\delta)$-locally
testable} if the following exist
\begin{itemize}
\item A probability space $\Omega$ which can be {\em efficiently}
sampled.
\item Functions $q_1, q_2, \ldots, q_p : \Omega \ra
\{1,\ldots,n\}$.
\item A boolean function $V : \Omega \times \Sigma^p \ra \{0,1\}$.
\end{itemize}
with the property that for all $w \in \Sigma^n$, if
$$\Pr_{r \in \Omega} \left [
V ( r, w_{q_1(r)}, \ldots, w_{q_p(r)} ) = 0
\right ] <  \delta $$
then there exists a (unique) string $m \in \Sigma^k$ such that
$d(w, C(m)) < d/2$.
Conversely, if $w = C(m)$ for some $m$, then $V ( r, w_{q_1(r)},
\ldots,
w_{q_p(r)} ) = 1$ for all $r \in \Omega$.
\end{definition}
The codes of \cite{BFLS} for instance produce good codes which are
$(\polylog(k_i), \Omega(1))$-locally testable.
The work of \cite{AS} implicitly describe a related code which
achieves both $p,{1 \over \delta} = O(1)$, but requires very
large alphabet sizes to get this -- namely their code requires
$a_i = k_i^{\epsilon}$.
The significant improvement in \cite{ALMSS}
is to get good codes which have $p=2$, $\delta > 0$ with $a_i =
\polylog(k_i)$. (By applying a recursive technique introduced
by \cite{AS} to this code they later manage to reduce $a$ to a
constant as well.)
The code used by \cite{ALMSS} is the following (see also
\cite{RS:jv}):
\begin{definition}
[Polynomial-Line Codes]
\label{defn:poly-line}
Let $c_1 > 1$ and $c_2 \geq 1$ be parameters.
The {\em polynomial-line codes}
$\{L_m\}$ are chosen
by letting $d = \Theta(m^{c_1})$ and picking a
finite field $F$ of size $\Theta(d^{c_2})$.
The code works over the alphabet $\Sigma = F^{d+1}$.
The message consists of
${m+d \choose d}$ field elements (or ${m+d \choose d}/(d+1)$ letters from
$\Sigma$) and is viewed as an $m$-variate  degree $d$ polynomial
specified by its coefficients.
Given a message polynomial $f$, the codeword
consists of $\{\pfd_{x,h}\}_{x,h \in F^m}$ where $\pfd_{x,h}$
is the line polynomial for the line $l_{x,h}$ described by its
$d+1$ coefficients.
The code achieves
$k_m = {m +d \choose d}/(d+1)$ and $n_m = |F|^{O(m^2)}$ over the
alphabet $F^{d+1}$.
\end{definition}
It is clear that for all constants $c_1$ and $c_2$ the Polynomial-Line
codes are good codes. \cite{ALMSS} show that for all $c_1 > 1$
and $c_2 \geq 3$ these codes give $(2,\delta>0)$-locally
testable codes. \cite{PS} improve this to $c_2 \geq 1$,
without changing the $\delta$ in any significant way. Our
analysis (Theorem~\ref{What})
immediately yields that Polynomial-Line Codes are
$(2, 1/8 - \epsilon)$-locally testable. It can
be easily shown that no code can achieve $(2, 1/2 + \epsilon)$-local
testability. Thus in this case
our results come close to optimality.

\paragraph{Connection with proof checking}

Lastly we describe a very informal manner the way
in which this affects the
construction of probabilistically checkable proofs. We
assume that the reader of this subsection is familiar with the
notion of probabilistically checkable proofs (PCPs) as defined
in \cite{AS} (see, for instance, \cite{Babai:survey} for a
survey). In particular we discuss the probe complexity of
proofs and the sizes of probabilistically checkable proofs.

As mentioned earlier every holographic proof ends up inheriting
part of its properties on some underlying locally-testable
code. In order to test that a given proof is valid one ends up
testing that the proof corresponds to a valid codeword. This
effectively implies that to obtain a fixed degree of
confidence, one has to look at $O(p/\delta)$ letters in the
proof. Thus the probe complexity of a PCP seems to be
inherently dependent on the ratio of $p$ and $\delta$.

However, the relationship between $p/\delta$ and the probe
complexity of PCP turns out to be not so simple. \cite{BGLR}
manage to reduce the probe complexity of a PCP
to about 24 bits to get a confidence of $1/2$ (from some
unknown number estimated to be around $10^4$ in \cite{ALMSS})
without improving the analysis of low-degree tests! How do
they obtain this reduction? It turns out that this reduction
is obtained by exploding the proof size to the order of
$n^{10^4}$ (from some smaller polynomial of size about
$n^{12}$ in
\cite{ALMSS}). But by incorporating the analysis from
this paper into the
analysis of PCP one can obtain better bounds on the probe
complexity of proof systems. The verifier we construct
probes a proof at most 165 bits (as opposed to the $10^4$ of
\cite{ALMSS,PS}) while increasing the proof size to only
$n^{2+\epsilon}$ (to be contrasted with the $n^{10^4}$ in \cite{BGLR}).
(We point out that the improvement relies fairly heavily on
the techniques developed in \cite{BGLR} and \cite{PS}, as well as
those of \cite{AS} and \cite{ALMSS}.)

\remove{
\section{Characterizing the Total Degree of Polynomials}  
\label{exact}

Let $F = F_q$ be a finite field of order $q = p^s$ where $p$ is its
characteristic.

\begin{theorem}\label{totaldegree}
Let $g: F^m \to F$ be a function which satisfies
$$\forall x,h \in F^m, t \in F ~~~~ \pgd_{x,h}(t) = g(x + t \cdot h).$$
Then if $q - q/p - 1 \geq d$, then g is a polynomial of degree at most
$d$.
\end{theorem}

{\bf Remark.}  The inequality $q-q/p-1 \ge d$ in the Theorem
cannot be weakened for any $q$. Indeed, consider
the 2-variable function
$g(x_1,x_2) =  (x_1^{(p-1)} x_2)^{q/p}$.
For every pair $x,h$, consider the formal expansion of $\pgd_{x,h}$.
Each term in this univariate polynomial has degree
$\le (p+1)q/p = q + q/p$ and each exponent is divisible by $q/p$.
As a function we have $t^q = t$ , hence
the degree of $\pgd_{x,h}$ is at most $q - q/p$.
On the other hand the total degree of $g$ is $q > (q-q/p)$.

For the proof of Theorem~\ref{totaldegree}, first we note
two easy facts about polynomials over $F_q$:    

\begin{lemma} \label{modp-1}
If the value of the formal polynomial $h(t) = \sum_{i=0}^m c_i t^i$
is equal to a polynomial of degree at most $d < q-1$ for every $t \in F_q$
then $\sum_{s=0}^{\infty} c_{d+1+(q-1) s} = 0$.
\end{lemma}

\begin{proof}
Use the identity $t^q = t$ (over $F_q$).
\end{proof}

\begin{lemma} \label{zero}
If a formal polynomial $h(x_1,\ldots,x_m)$ has $\deg_{max} \le q-1$
and it vanishes over $F_q^m$ then $h$ is formally zero.
\end{lemma}

\begin{proof}
Follows by induction on $m$.
\end{proof}

We also need a lemma about the behavior of the binomial coefficients
modulo $p$.

\begin{lemma} \label{binom}
Let $0 < r \le n \le p^s -1$. If  $r = k p^{s-1}$
then ${n \choose r}$ is not  divisible by $p$.
\end{lemma}

\begin{proof}
For any positive integer $l$, the largest power of $p$ that
divides $l!$ is $\lfloor l/p \rfloor + \lfloor l/p^2 \rfloor +
\lfloor l/p^3\rfloor + \cdots$.
But for $r = k p^{s-1}$, the identity
$\lfloor n/p^i \rfloor = \lfloor n/p^i \rfloor +
\lfloor (n-r)/p^i\rfloor$ holds.
Therefore $n!$ and $r! (n-r)!$ are divisible by exactly the
same power of $p$.
\end{proof}

\begin{proof}[Theorem~\ref{totaldegree}]
Over $F_q$, every function is represented by a formal polynomial of
$\deg_{max} \le q-1$,  and by Lemma~\ref{zero}, this polynomial is unique.
Let us regard $g$ as such a polynomial.

Applying the condition to axis parallel lines, it follows
(using Lemma~\ref{zero}) that the degree of $x_i$ in $g$ is $\le d$.
This means that $g$ can be written in the form
$$g(x_1, \ldots, x_m) =
\sum_{i_j \le d} \alpha_{i_1, \ldots, i_m} x_1^{i_1} \cdots
x_m^{i_m}.$$

We show by induction on $m$ that $\alpha_{i_1, \ldots, i_m} = 0$
whenever $\sum_{j=1}^m i_j \ge d+1$.

By the inductive hypothesis
\begin{equation} \label{induction}
\sum_{j=1}^{m-1} i_j \ge d+1 \quad \Longrightarrow
\quad \alpha_{i_1, \ldots, i_m} = 0.
\end{equation}
This is so because if we collect the terms
$x_1^{i_1} \cdots x_m^{i_m}$, where $i_1, \ldots, i_{m-1}$ is fixed
and $i_m$ varies, we obtain $x_1^{i_1} \cdots x_{m-1}^{i_{m-1}}$ times
a polynomial $h_{i_1, \ldots, i_{m-1}}(x_m)$ of degree $\le d$.
Any substitution $x_m = b$ in $f$ gives a polynomial of $(m-1)$-variables
which by the inductive hypothesis has total degree $\le d$.
Therefore for any $b \in F_q$ and any $i_j$ such that
$\sum_{j=1}^{m-1} i_j \ge d+1$, $h_{i_1, \ldots, i_{m-1}}(b) = 0$.
This implies that for these $i_j$ values the polynomial
$h_{i_1, \ldots, i_{m-1}}$  is formally zero.
The coefficient of $x_m^{i_m}$ in $h_{i_1, \ldots, i_{m-1}}$ is exactly
$\alpha_{i_1, \ldots, i_m}$. This completes the proof of
(\ref{induction}).

Consider now the polynomial $h(t)$ obtained by the following formal
substitutions: $x_i=a_i t$, $1 \le i \le m-1$ and $x_m = t+b$,
where $a_i, b \in F_q$.
For fixed $a_i$ and $b$, $g$ is a polynomial in $t$.
By (\ref{induction}) the largest exponent $\ell$ of $t$ that appears in
$h$ is
$\ell \le 2d \le d + q- q/p -1 \le d + q-2$.
By assumption $h$ agrees with a polynomial of degree $\le d$
and so Lemma~\ref{modp-1}
implies that  the coefficient of $t^{\ell}$ for
$d+1 \le \ell \le q-1$ has to be zero
(for every choice of $a_i, b$).
The coefficient of $t^{\ell}$ is
$$\sum_{\stackrel{i_j \le d}{r + \ell = \sum i_j}}
      \alpha_{i_1, \ldots i_m}\, a_1^{i_1} \cdots a_{m-1}^{i_{m-1}}\,
      {i_m \choose r}\, b^r.$$
This is a polynomial in $b$ of degree $\max(\sum i_j) - \ell \le
2d-(d+1)$,
therefore the coefficient of $b^r$ has to be zero for all choices
of $a_1, \ldots, a_{m-1}$. This coefficient is a polynomial of the
$a_i$ of $\deg_{max} \le d$; therefore by Lemma~\ref{zero}
it is formally zero, which means that
$\alpha_{i_1, \ldots i_m}\, {i_m \choose r} = 0$
for all values of $i_j \le d$ and $r = \sum i_j - \ell$
($d+1 \le \ell \le q-1$).
For fixed $i_j$ this gives $q-d-1$
possible values for $r$. By assumption $q - d- 1 \ge q/p = p^{s-1}$,
there
is an $r$ in the form $k p^{s-1}$ in the interval.
For this $r$, ${i_m \choose r}$ is not zero $\bmod\,p$
(Lemma~\ref{binom}),
which shows that $\alpha_{i_1, \ldots i_m} = 0$ holds.
\end{proof}
}

\section{Characterizing the Total Degree of Polynomials}  
\label{exact}

Let $F = F_q$ be a finite field of order $q = p^s$ where $p$ is its
characteristic.
\begin{theorem}\label{totaldegree}
Let $g: F^m \to F$ be a function which satisfies
$$\forall x,h \in F^m, t \in F ~~~~ \pgd_{x,h}(t) = g(x + t \cdot h).$$
Then if $q - q/p - 1 \geq d$, then g is a polynomial of degree at most
$d$.
\end{theorem}
{\bf Remark.}  The inequality $q-q/p-1 \ge d$ in the Theorem
cannot be weakened for any $q$. Indeed, for any $d$ such that
$q - q/p - 1 < d < q$,
consider the bivariate function
$g(x_1,x_2) =  (x_1^{(p-1)} x_2)^{q/p}$.
For every pair $x,h$, the univariate function $P_{x,h}$
given by $P_{x,h}(t) = g(x + t\cdot h)$. Each term in this
univariate polynomial has degree at most
$(p+1)q/p = q + q/p$ and each exponent is divisible by $q/p$.
As a function we have $t^q = t$ , and thus
$\deg(P_{x,h})$ is at most $q - q/p$.
Thus we have $\pgd_{x,h} \equiv P_{x,h}$.
On the other hand the total degree of $g$ is $q > d$.

For the proof of Theorem~\ref{totaldegree},
we first prove a lemma about the behavior of the binomial coefficients
modulo $p$.
\begin{lemma} \label{binom}
Let $0 < r \le n \le p^s -1$. If  $r = k p^{s-1}$
then ${n \choose r}$ is not  divisible by $p$.
\end{lemma}
\begin{proof}
For any positive integer $l$, the largest power of $p$ that
divides $l!$ is $\lfloor l/p \rfloor + \lfloor l/p^2 \rfloor +
\lfloor l/p^3\rfloor + \cdots$.
But for $r = k p^{s-1}$, the identity
$\lfloor n/p^i \rfloor = \lfloor r/p^i \rfloor +
\lfloor (n-r)/p^i\rfloor$ holds.
Thus the largest power of $p$ that divides $n!$ is $\sum_{i=1}^{\infty}
\lfloor n/p^i \rfloor =    \sum_{i=1}^{\infty} (\lfloor r/p^i \rfloor +
\lfloor (n-r)/p^i\rfloor)$.
Therefore $n!$ and $r! (n-r)!$ are divisible by exactly the
same power of $p$.
\end{proof}
\begin{proof}[Theorem~\ref{totaldegree}]
Assume for the sake of contradiction that the assertion of the theorem is false.
Let $m$ be the smallest positive integer for which the following holds:
\begin{eqnarray}
\lefteqn{\exists g : F^m \to F \mbox{s.t. }  \forall x,h \in F^m, t \in F }
\nonumber \\
& & \pgd_{x,h}(t) = g(x + t\cdot h) \mbox{ but } \deg(g) > d.
\label{eqn:contr}
\end{eqnarray}
Express $g$ in the form:
$$g(x_1,\ldots,x_m) = \sum_{i_1=0}^{q-1}\cdots\sum_{i_m=0}^{q-1}
\alpha_{i_1,\ldots,i_m} x_1^{i_1} \cdots x_m^{i_m}.$$
(Notice that there exists $\alpha$'s such that the above is true, and these are
unique.) Since $g$ is not a degree $d$ polynomial, there exist $l$ and
$i_1,\ldots,i_m$
such that $\sum_{j=1}^m i_j = l > d$ and $\alpha_{i_1,\ldots,i_m} \ne 0$. Let
$l$ be the largest integer with this property. We consider the following cases:
\case{$\sum_{j=1}^{m-1} i_j > d$}
	{
	We show that this contradicts the assumption that $m$ is the smallest
	integer for which (\ref{eqn:contr}) holds. For $a_m \in F$, let
	$g_{a_m}:F^m \to F$ be given by
	$g_{a_m}(x_1,\ldots,x_{m-1}) = g(x_1,\ldots,x_{m-1}, a_m)$. Notice first
	that $g_{a_m}$ satisfies $g_{a_m}(x + t\cdot h) = P^{(g_{a_m},d)}_{x,h}
	(t)$ for all $x,h \in F^{m-1}$ and $t \in F$. This follows from the fact
	that $P^{(g_{a_m},d)}_{x,h} (\cdot) = \pgd_{x',h'}(\cdot)$ for $x' =
	<x,a_m>$ and $h' = <h,0>$. We now show that there exists $a_m$ such that
	$\deg(g_{a_m}) > d$. Observe that the coefficient for $x_1^{i_1}\cdots
	x_{m-1}^{i_{m-1}}$ is $\sum_{i=0}^{q-1} \alpha_{i_1,\ldots,i_{m-1},i}
	a_m^i$. This summation is a non-zero polynomial in $a_m$ of
	degree less than $q$.
	Thus there must exist a point $a_m$ where the summation
	is non-zero. This gives us $a_m$
	such that $g_{a_m}$, a function of $m-1$ variables, satisfies (\ref{eqn:contr}).
	As promised, this violated the minimality of $m$.
	}
\case{$\sum_{j=2}^{m} i_j > d$}{Similar to above.}
\case{$l = \sum_{j=1}^m i_j < q$}
	{
	For $a_1,\ldots,a_m \in F$, let	$g_{a_1,\ldots,a_m}: F \to F$ be given
	by $g_{a_1,\ldots,a_m}(t) = g(a_1t,\ldots,a_mt)$. The coefficient of
	$t^l$ in $g_{a_1,\ldots,a_m}$ is given by
	
$$\sum_{
k_1,\ldots,k_m \\
\mbox{ s.t. } k_1 + \ldots + k_m = l
}
	\alpha_{k_1,\ldots,k_m} a_1^{k_1} \cdots a_m^{k_m}.$$

	Since this expression is a polynomial in the $a_j$'s of degree less
	than $q$ and is not identically zero, there exist
	$a_1,\ldots,a_m$ for which the coefficient of $t^l$ in
	$g_{a_1,\ldots,a_m}(t)$ is non-zero. But for $x = 0$ and $a =
	<a_1,\ldots,a_m>$, we find that $\pgd_{x,a} \equiv g_{a_1,\ldots,a_m}$.
	and the fact that $g_{a_1,\ldots,a_m}$ is not a polynomial of degree $d$
	contradicts the conditions guaranteed in
	(\ref{eqn:contr}).
	}
\case{None of the above}
	{
	In this case we have $\sum_{j=1}^{m-1} i_j \leq d$, $i_m \leq \sum_{j=2}^{m}
	i_j \leq d$ and $l \geq q$.
	Here we consider the function
	$g_{a_1,\ldots,a_{m-1},b}(t) = g(a_1t,\ldots,a_{m-1}t, b + t)$
	and show that for some choice of $a_1,\ldots,a_{m-1}$ and $b$, the
	coefficient of $t^r$ in $g_{a_1,\ldots,a_{m-1},b}$ is non-zero, for
	some $r$ in the range $[d+1,q]$, of the form
	$n p^{s-1} + \sum_{j=1}^{m-1} i_j$.
	Such a choice for $r$ exists since the range $[d + 1,q]$
	contains at least $q - d- 1 \geq q/p = p^{s-1}$ elements.

	We start with the observation that the coefficient of $t^r$ in the
	function $g_{a_1,\ldots,a_{m-1},b}(t)$ is the same as the coefficient of
	$t^r$ in the formal power series expansion of $g$ with the formal
	substitutions $x_i =
	a_i t$ and $x_m = b + t$. This is true because the formal power series
	contains terms of degree at most $l$ and $l$ satisfies the condition:
	$q + r > l$. (Since $l = i_m + \sum_{j=1}^{m-1} i_j \leq i_m + r
	\leq d + r < r + q$.)

	The coefficient of $t^r$ in the formal expansion is
{\small
$$\sum
{k_m \choose r - \sum_{i=1}^{m-1} k_i}
\alpha_{k_1,\ldots,k_m}
\prod_{i=1}^{m-1}a_i^{k_i}
b^{\sum_{i=1}^m k_i - r}
$$}
where the summation ranges over all choices of $k_1,\ldots,k_m$ such that
$r \leq \sum_{j=1}^m k_j \leq l$.
\remove{
$$\sum_{
\begin{array}{c}
k_1,\ldots,k_m \\
\mbox{ s.t. } r \leq \sum_j k_j \leq l
\end{array}
}
\left(
\begin{array}{lc}
&{k_m \choose r - k_1 - \cdots - k_{m-1}} \\
\times & \alpha_{k_1,\ldots,k_m} \\
\times & a_1^{k_1} \cdots a_{m-1}^{k_{m-1}} \\
\times & b^{k_1 + \cdots + k_m - r}
\end{array}
\right).$$
}
	Thus coefficient is a polynomial in $a_j$'s and $b$ of degree at most
	$q$ in each variable. Moreover for $k_j = i_j$, the term
	$\alpha_{k_1,\ldots,k_m}$ is non-zero and the term
	${k_m \choose r - k_1 - \cdots - k_{m-1}}$ simplifies to ${i_m \choose
	n p^{s-1}}$ which is also non-zero (by Lemma~\ref{binom}).
	Thus the coefficient of $t^r$ is a non-zero polynomial of maximum
	degree at most $q$.
	Hence there exists a
	choice of $a_1,\ldots,a_{m-1}$ and $b$ such that the coefficient of
	$t^r$ is non-zero modulo $p$.

	We now obtain the contradiction in the usual way.
	We observe that $g_{a_1,\ldots,a_{m-1},b} \equiv \pgd_{x,h}$ for
	$x = <0,\ldots,0,b>$ and $h = <a_1,\ldots,a_{m-1},1>$. Thus
	$g_{a_1,\ldots,a_{m-1},b}$ should be a polynomial of degree at most $d$,
	contradicting the fact that the coefficient of $t^r$ is non-zero.
	}
\end{proof}

\section{Efficiency of the Lines test}

The main theorem of this section is motivated by the following
tester: The tester ${\cal T}$ is provided access to an oracle for
$f : F^m \to F$ and an augmenting oracle $O:F^{2m} \to F^{d+1}$.
The augmenting oracle takes as input the description of a line
by the pair $x,h \in F^m$ and provides the coefficients
of the ``line polynomial'' $P_{x,h}$. The effect of
Theorem~\ref{main theorem} is to show that the tester behaves as
follows:
\begin{itemize}
\item If $f$ is a degree $d$ polynomial then there exists an $O$
such that $\calT^{f,O}$ always accepts.
\item If $d(f,g) \geq 1/4$ for every degree $d$ polynomial $g$,
then for every oracle $O:F^{2m}\to F^{d+1}$, $\calT^{f,O}$
rejects with probability at least $1/8 - \epsilon$.
\item $\calT$ makes exactly one call to each oracle (i.e., $f$ and $O$).
\end{itemize}

The consequences of this theorem are summarized in
Section~\ref{next}.
\begin{theorem}
\label{main theorem}
For every $\epsilon > 0$, there exists
$c < \infty$ such that for all $d \in \integers$ if $|F| \geq cd$ then
the following holds.
Given a function $f:F^m \to F$ and degree $d$ polynomials
$\{P_{x,h}\}$ such that:
$$ \Pr_{x,h,t} \left[ P_{x,h} (t) \ne f(x + t \cdot h) \right]  =
\delta \leq 1/8 - \epsilon$$
there exists a degree $d$ polynomial $g$ such that $d(f,g) \leq
2\delta$.
\end{theorem}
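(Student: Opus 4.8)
The plan is to run the Rubinfeld--Sudan analysis of the lines test, using the refinements of \cite{ALMSS} and \cite{PS} so that every error estimate stays free of any factor of $d$ --- this $d$-independence is precisely what lets $\delta$ be a constant rather than $O(1/d)$ as in \cite{RS}. \textbf{Step 1 (the candidate).} For each $x\in F^m$, let $g(x)$ be the plurality value of $P_{x,h}(0)$ as $h$ ranges uniformly over $F^m$; since $l_{x,h}$ passes through $x$ at $t=0$, this is the value most often voted for $x$ by a random line through $x$. It then suffices to establish (i) $d(f,g)\le 2\delta$ and (ii) $\deg(g)\le d$.

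\textbf{Step 2 (proximity).} Writing $\delta_x=\Pr_{h,t}[P_{x,h}(t)\ne f(x+t\cdot h)]$, we have $\E{\delta_x}=\delta$. Reparametrizing a uniform $(x,h,t)$ as a uniform point, a uniform (parametrized) line through it, and a uniform point on that line, a Markov-type argument shows that for a $1-O(\delta)$ fraction of points $x$ almost all of the polynomials $\{P_{x,h}\}_h$ take the value $f(x)$ at $t=0$; on each such $x$ the plurality defining $g(x)$ is $f(x)$ itself, so $\Pr_x[f(x)\ne g(x)]\le 2\delta$.

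\textbf{Step 3 (low-degreeness).} This is the core. First show that for almost every pair $(x,h)$ the oracle polynomial $P_{x,h}$ equals the restriction of $g$ to the line, i.e.\ $P_{x,h}(t)=g(x+t\cdot h)$ for \emph{all} $t$, not merely a random $t$. A clean route is a counting argument: on a line where $P$ and the restriction of $g$ disagree, the two degree-$d$ polynomials disagree at $\ge|F|-d$ of the line's points, and at all but a $d(f,g)$-fraction of those $f$ agrees with $g$ (Step 2) hence disagrees with $P$, so each bad line contributes $\Omega(|F|)$ to the total number of (line, point-on-line) disagreements between $P$ and $f$, which averages only $\delta\cdot|F|$ per line; since $d(f,g)\le 2\delta$ absorbs the leftover term, the fraction of bad lines is $O(\delta)$, with no $d$ in the bound because $|F|\ge cd$ makes $|F|-d$ a constant fraction of $|F|$. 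Next upgrade ``$g$ restricts to a degree-$d$ polynomial on almost every line'' to ``on \emph{every} line'': restrict to a random two-dimensional plane containing the given line; there $f$ is close to a bivariate polynomial of total degree $\le d$ by a Polishchuk--Spielman-style bivariate characterization (using $|F|\ge cd$ and Schwartz--Zippel \cite{Schwartz,Zippel}), and that bivariate polynomial must coincide with $g$ on the whole plane, forcing $g$ to restrict to a degree-$d$ polynomial on the line as well. Finally, a function restricting to a degree-$\le d$ univariate polynomial on every line has total degree $\le d$ by Theorem~\ref{totaldegree}, whose hypothesis on $|F|$ is comfortably met here.

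\textbf{Main obstacle.} The delicate part is the error accounting across Step 3: one must certify that the probability a random line is bad is a \emph{fixed} constant multiple of $\delta$, with no concealed dependence on $d$ --- exactly what separates this theorem from the much weaker Lemma~\ref{robust}, which needs $\delta=O(1/d^2)$ --- and then verify that all the majorities and union bounds in the plane/self-correction argument close under the single hypothesis $\delta<1/8-\epsilon$, which is where that numerical threshold enters. The field-size condition $|F|\ge cd$ is what supplies the slack for the repeated Schwartz--Zippel applications and the final bivariate characterization; as the remark following the theorem notes, this aspect of the analysis is inherited from \cite{PS}.
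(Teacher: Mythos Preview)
Your outline is the classical \cite{ALMSS}-style argument, and it has two substantive problems relative to what the paper actually proves.

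\textbf{A circularity in Step~3.} Your counting argument asserts that on a bad line ``the two degree-$d$ polynomials disagree at $\ge |F|-d$ points,'' treating the restriction $g|_{l_{x,h}}$ as a degree-$d$ polynomial. But that is exactly what you have not yet established: $g$ is at this point only the plurality function, and you have no control on the degree of its restriction to any line. Without that, the counting argument collapses --- a bad line could have $P_{x,h}$ agreeing with $g$ at all but one point and you could not bound the number of bad lines at all. The standard fix is to skip this step and go directly to the plane/bivariate argument, but then the constants degrade.

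\textbf{Where does $1/8$ come from?} Even with Step~3 repaired, you are describing the \cite{ALMSS} analysis, which yields \emph{some} absolute constant $\delta_0>0$; the paper's entire point is to push that constant up to $1/8-\epsilon$. Your sketch does not isolate any mechanism that would produce $1/8$ rather than, say, $10^{-4}$. The paper obtains $1/8$ by two genuinely new ingredients that your plan lacks:
\begin{itemize}
\item[(i)] It does \emph{not} try to show directly that $\corr_f$ has degree $\le d$. Instead it proves the contraction $\delta_{\corr_f}<\delta_f$ (Lemma~\ref{main lemma}) and runs an \emph{induction on $\delta$}: the inductive hypothesis furnishes a degree-$d$ polynomial $g$ close to $\corr_f$, and then Lemmas~\ref{1/2th} and~\ref{1/4th} bootstrap $d(f,g)\le 4\delta$ down to $d(f,g)\le 2\delta$. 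This sidesteps the circularity above entirely.
\item[(ii)] The contraction is proved not on a two-dimensional plane but on the \emph{surface} $\{x+ih_1+jh_2+ijh_3: i,j\in F\}$. The extra $ijh_3$ term is what makes the rows $\row_i$ (and the columns) \emph{pairwise independent} uniformly random lines; on a genuine plane all rows share the same direction $h_2$ and are far from independent. Pairwise independence lets the second-moment method control $\sum_i\delta(\row_i)/|F|$ and $\sum_j\delta(\col_j)/|F|$ to within $1/8-\epsilon/2$ each with failure probability $O(\delta_f/|F|)$, so that their sum meets the $1/4-\epsilon$ hypothesis of the Polishchuk--Spielman bivariate lemma (Lemma~\ref{ASPS}). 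That arithmetic, $1/8+1/8=1/4$, is precisely the origin of the threshold.
\end{itemize}
Your ``random two-dimensional plane containing the given line'' gives neither pairwise independent rows nor the second-moment concentration, so even if you invoke \cite{PS} you will not recover the $1/8$ constant; you will land back at the unspecified small $\delta_0$ of \cite{ALMSS}, which is exactly what the theorem is meant to improve.
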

Our proof is based on the proof in \cite{RS:jv} and borrows various
ingredients from their technique. However our analysis
seems to be simplify certain aspects of their proof by introducing an
inductive analysis to their proof. The improvement in the value of
$\delta$ is obtained by very careful sampling of the underlying space and
the application of pairwise independent analysis to their space. The use
of pairwise independent analysis in low-degree testing seems to be new.

In what follows  we fix an $\epsilon > 0$. We assume that $c \to
\infty$. Thus whenever the notation $\alpha = o(1)$ is used in what
follows, it implies that $\alpha \to 0$ as $c \to \infty$.

We start with a couple of definitions. Given a function $f: F^m \to F$,
let $\delta_f$ be defined as
$$\delta_f = \Pr_{x,h} \left[ f(x) \ne \pfd_{x,h} (0) \right]$$
and let $\corr_f : F^m \to F$ be the function defined by
$$\corr_f (x) = \plurality_h \{ P_{x,h}^{(f)} (0) \}
\footnote{The {\em plurality} of a multiset is the most commonly occurring
element in the multiset. We use the word plurality as opposed to majority
since the latter could also be used to point to the (unique) element that
occurs with frequency more than half.}.$$
We start with a few basic facts about $\delta_f$ and $\corr_f$.
\begin{fact}
\label{deltadeltaf}
For any function $f:F^m \to F$, and degree $d$ polynomials $\{P_{x,h}:F
\to F\}_{x,h \in F^m}$,
$$\Pr_{x,h,t} \left[ f(x + t\cdot h) \ne P_{x,h}(t) \right] \geq
\delta_f.$$
\end{fact}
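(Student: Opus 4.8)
To prove Fact~\ref{deltadeltaf}, the plan is to pass through the line polynomials $\pfd_{x,h}$, which are by definition the \emph{best} degree‑$d$ univariate approximations along lines. First I would fix a pair $x,h\in F^m$ and use that $\pfd_{x,h}$ agrees with the univariate function $t\mapsto f(x+t\cdot h)$ on at least as many arguments $t\in F$ as any degree‑$d$ polynomial does, hence on at least as many as $P_{x,h}$ does; equivalently, for every fixed $x,h$,
$$\Pr_{t}\left[P_{x,h}(t)\ne f(x+t\cdot h)\right]\ \ge\ \Pr_{t}\left[\pfd_{x,h}(t)\ne f(x+t\cdot h)\right].$$
Averaging this inequality over a uniformly random $(x,h)$ then reduces the claim to showing that $\Pr_{x,h,t}\left[\pfd_{x,h}(t)\ne f(x+t\cdot h)\right]=\delta_f$ (an equality, which is more than we need).

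To establish that equality I would change variables, setting $y=x+t\cdot h$: for every fixed direction $h$ and parameter $t$ the map $x\mapsto x+t\cdot h$ is a bijection of $F^m$, so $(x,h,t)$ is distributed exactly like $(y-t\cdot h,\,h,\,t)$ with $y$ uniform. The key observation to feed in is the reparametrization identity $\pfd_{x,h}(t)=\pfd_{x+t\cdot h,\,h}(0)$: the line through $x+t\cdot h$ in direction $h$ is the same set of points as the line through $x$ in direction $h$, and the two parametrizations differ by the shift $s\mapsto s+t$, which preserves the degree of a univariate polynomial, so a degree‑$d$ polynomial is optimal along one parametrization exactly when its shift is optimal along the other; evaluating both at the point $x+t\cdot h$ gives the identity. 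Substituting, $\Pr_{x,h,t}\left[\pfd_{x,h}(t)\ne f(x+t\cdot h)\right]=\Pr_{y,h,t}\left[\pfd_{y,h}(0)\ne f(y)\right]$, and since this event no longer involves $t$ it equals $\Pr_{y,h}\left[\pfd_{y,h}(0)\ne f(y)\right]=\delta_f$. The degenerate case $h=0$ is trivial, since then $\pfd_{x,0}\equiv f(x)$ and contributes $0$ on both sides.

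I expect the only genuinely delicate point to be the reparametrization identity, because $\pfd$ breaks ties ``arbitrarily'': when several degree‑$d$ polynomials fit $f$ equally well on a line, the one picked for base point $x+t\cdot h$ need not be the shift of the one picked for base point $x$, and (as simple $d=0$ examples show) with a pathological fixed tie‑breaking rule the quantity $\delta_f$ can even be pushed above $\Pr_{x,h,t}\left[\pfd_{x,h}(t)\ne f(x+t\cdot h)\right]$. This is harmless for the analysis, and I would resolve it by the natural, and surely intended, convention: since $\pfd$ is used only as an analytic device, fix once and for all a tie‑breaking rule that depends only on the line as a set of points together with one canonical parametrization of it, so that the shift identity holds verbatim and the argument above goes through unchanged. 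Everything else is a routine bijection‑and‑averaging calculation.
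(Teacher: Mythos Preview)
Your approach is correct and is exactly the paper's: the paper's one-line justification is precisely your first step, that for each fixed $x,h$ the line polynomial $\pfd_{x,h}$ minimizes $\Pr_t[\,\cdot\ne f(x+t\cdot h)\,]$ over all degree-$d$ univariate polynomials. The paper leaves both the change of variables identifying the averaged quantity with $\delta_f$ and the tie-breaking convention implicit; your treatment of these points is in fact more careful than the paper's own.
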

The above fact follows directly from the fact that for each $x,h$
$P^{f,d}_{x,h}$ minimizes (over random  $t$) the probability that $f(x +
t\cdot h) \ne P_{x,h}(t)$.
\begin{lemma}
[\cite{GLRSW}]
\label{2deltaf}
$d(f,\corr_f) \leq 2 \delta_f$.
\end{lemma}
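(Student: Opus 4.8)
The plan is to run the standard plurality-voting argument of \cite{GLRSW}. Let $B = \{x \in F^m : f(x) \ne \corr_f(x)\}$, so that $d(f,\corr_f) = \Pr_x[x \in B]$, and it suffices to show $\Pr_x[x \in B] \le 2\delta_f$. For a fixed $x$, let $p_x = \Pr_h[\pfd_{x,h}(0) = f(x)]$, the fraction of $h \in F^m$ for which the line polynomial through $x$ with offset $h$ predicts the correct value $f(x)$ at $x$. Unwinding the definition of $\delta_f$ gives $\delta_f = \E{1 - p_x}$, the expectation being over a uniformly random $x \in F^m$.

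The key step is to show that $p_x \le 1/2$ for every $x \in B$. By definition $\corr_f(x)$ is the plurality element of the multiset $\{\pfd_{x,h}(0)\}_{h \in F^m}$, so the frequency $q_x = \Pr_h[\pfd_{x,h}(0) = \corr_f(x)]$ is at least the frequency of \emph{any} other value; in particular $q_x \ge p_x$. For $x \in B$ the values $f(x)$ and $\corr_f(x)$ are distinct, so the events $\{\pfd_{x,h}(0) = f(x)\}$ and $\{\pfd_{x,h}(0) = \corr_f(x)\}$ are disjoint, which gives $p_x + q_x \le 1$. Combining the two inequalities, $2p_x \le p_x + q_x \le 1$, hence $1 - p_x \ge 1/2$.

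To conclude, bound the expectation defining $\delta_f$ from below by the contribution of $B$, using $1 - p_x \ge 0$ everywhere and $1 - p_x \ge 1/2$ on $B$:
$$\delta_f = \E{1 - p_x} \ \ge\ \frac{1}{2}\,\Pr_x[x \in B] \ =\ \frac{1}{2}\,d(f,\corr_f),$$
and rearranging gives $d(f,\corr_f) \le 2\delta_f$.

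I do not anticipate any genuine obstacle. The only point requiring a moment's care is the disjointness of the two events in the second paragraph — which is exactly where the hypothesis $f(x) \ne \corr_f(x)$ defining $B$ is used — and the only nontrivial inequality is $q_x \ge p_x$, which is immediate from the meaning of ``plurality''. Everything else is bookkeeping with the definitions of $\delta_f$ and $\corr_f$.
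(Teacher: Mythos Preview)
Your argument is correct and is exactly the standard plurality-voting proof of this fact from \cite{GLRSW}; the paper itself does not reprove the lemma but merely cites \cite{GLRSW}, so there is nothing further to compare.
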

\begin{lemma}
[\cite{GLRSW}]
\label{1/4th}
For all $\beta > 0$, if $g$ is a degree $d$ polynomial such that
$d(f,g) < 1/4 - \beta$,
then $\corr_f \equiv g$.
\end{lemma}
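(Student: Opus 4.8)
The plan is to prove the pointwise statement that for every $x \in F^m$ the value $g(x)$ occurs among $\{\pfd_{x,h}(0) : h \in F^m\}$ with frequency strictly more than $1/2$; an element occurring with frequency above $1/2$ in a multiset is its unique plurality, so this yields $\corr_f(x) = g(x)$ for every $x$, i.e.\ $\corr_f \equiv g$. So fix an arbitrary $x$ and let $h$ be uniform over $F^m$.

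The first ingredient is an interpolation observation. The map $t \mapsto g(x + t\cdot h)$ is a univariate polynomial of degree at most $d$; if $f(x+t\cdot h) = g(x+t\cdot h)$ for strictly more than $(|F|+d)/2$ values of $t \in F$, then this polynomial is the \emph{unique} degree-$\le d$ polynomial of maximum agreement with $t \mapsto f(x+t\cdot h)$, since any distinct degree-$\le d$ polynomial agrees with $t \mapsto g(x+t\cdot h)$ on at most $d$ values of $t$ and hence with $t \mapsto f(x+t\cdot h)$ on strictly fewer than $(|F|+d)/2$ values. In that case $\pfd_{x,h}$ is exactly $t \mapsto g(x+t\cdot h)$, so $\pfd_{x,h}(0) = g(x)$. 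Contrapositively, $\pfd_{x,h}(0) \ne g(x)$ forces $f(x+t\cdot h) \ne g(x+t\cdot h)$ for at least $(|F|-d)/2$ values of $t$.

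It remains to bound the probability of this last event by a first-moment argument. The number of $t \in F$ with $f(x+t\cdot h)\ne g(x+t\cdot h)$ equals $\mathbf{1}[f(x)\ne g(x)] + \sum_{t\ne 0}\mathbf{1}[f(x+t\cdot h)\ne g(x+t\cdot h)]$, and for each fixed $t\ne 0$ the point $x+t\cdot h$ is uniform over $F^m$, so the expectation over $h$ of this count is at most $1 + (|F|-1)\,d(f,g) < 1 + (|F|-1)(1/4-\beta)$. Markov's inequality then gives $\Pr_h[\pfd_{x,h}(0)\ne g(x)] \le \frac{1 + (|F|-1)(1/4-\beta)}{(|F|-d)/2}$, and a short computation shows the right-hand side is strictly below $1/2$ once $|F|$ is large enough relative to $d$ and $1/\beta$ — which holds under the blanket assumption $|F|\ge cd$ of this section, with $c$ taken large enough (depending on $\beta$). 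Hence $\Pr_h[\pfd_{x,h}(0)=g(x)] > 1/2$, as required. The argument is thus a routine counting/Markov estimate; the points deserving care are the strict inequalities in the interpolation step — ensuring $t\mapsto g(x+t\cdot h)$ is the \emph{unique} best fit rather than merely tied for best — and checking that the Markov bound genuinely falls below $1/2$, which is precisely where the largeness of $|F|$ relative to $d$ and $1/\beta$ is used.
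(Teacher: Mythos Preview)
The paper does not actually give its own proof of this lemma; it is stated as a citation to \cite{GLRSW} and used as a black box. So there is nothing in the paper to compare against directly. Your argument is correct and is essentially the standard one: fix $x$, observe that $\pfd_{x,h}(0)\ne g(x)$ forces at least $(|F|-d)/2$ disagreements between $f$ and $g$ on the line $l_{x,h}$, then apply Markov to the expected disagreement count $\le 1+(|F|-1)d(f,g)$. The care you take with the strict inequality (to avoid tie-breaking ambiguity in the definition of $\pfd_{x,h}$) is appropriate.

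One small remark on context: your proof needs $|F|$ of order roughly $d/\beta$, and you correctly flag this. In the paper this is harmless, since the only invocation of the lemma (inside the proof of Theorem~\ref{main theorem}) is with $\beta$ tied to the fixed $\epsilon$ and $|F|\ge cd$ for $c=c(\epsilon)$. It is worth noting, though, that the adjacent Lemma~\ref{1/2th} in the paper is proved by a second-moment (pairwise-independence) argument over random $x$ \emph{and} $h$; that technique does not transplant directly to your setting because once $x$ is fixed the points $x+th$ are no longer pairwise independent over the choice of $h$, so the first-moment/Markov route you took is indeed the natural one for the pointwise conclusion $\corr_f(x)=g(x)$.
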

We need a slightly stronger version of the above lemma for our purposes
which we prove next.
\begin{lemma}
\label{1/2th}
For all $\beta > 0$,
if $g:F^m \to F$ is a degree $d$ polynomial such that
$d(f,g) < 1/2 - \beta$
then $d(\corr_f,g) = o(1)$.
\end{lemma}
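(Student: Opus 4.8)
The plan is to leverage Lemma~\ref{1/4th}, which already gives the conclusion ($\corr_f \equiv g$, which certainly implies $d(\corr_f,g)=0$) whenever $d(f,g) < 1/4-\beta$. So the only new regime to handle is $1/4-\beta \le d(f,g) < 1/2-\beta$. The idea is a self-correction / averaging argument: even though $f$ may disagree with $g$ on up to (nearly) half the points, the line polynomials $P_{x,h}^{(f,d)}$ restricted to $g$'s behavior should still, for most $x$, pick out $g(x)$ as the plurality value of $\{P_{x,h}^{(f,d)}(0)\}_h$. First I would recall that $g$ being a degree $d$ polynomial means that for every line $l_{x,h}$, the restriction $g(x+t\cdot h)$ is a univariate polynomial of degree at most $d$; hence for a fixed $x,h$, if $f$ agrees with $g$ on strictly more than $(|F|+d)/2$ of the points of the line, then $g|_{l_{x,h}}$ is \emph{the} degree-$d$ polynomial agreeing with $f$ on a plurality of points of the line, so $P_{x,h}^{(f,d)} \equiv g|_{l_{x,h}}$ and in particular $P_{x,h}^{(f,d)}(0) = g(x)$.

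The key quantitative step is then to show that for all but an $o(1)$ fraction of $x \in F^m$, a majority (in fact strictly more than $(|F|+d)/(2|F|) \approx 1/2 + d/(2|F|)$ fraction, which since $|F| \ge cd$ is at most $1/2 + o(1)$) of the lines $h$ through $x$ have $f$ agreeing with $g$ on more than $(|F|+d)/2$ of their points. Fix $x$ and consider a uniformly random offset $h$; the points $\{x + t\cdot h : t \ne 0\}$ are pairwise independent and uniform over $F^m$, so the number of $t\ne 0$ with $f(x+t\cdot h)\ne g(x+t\cdot h)$ is a sum of pairwise-independent indicator variables with mean $d(f,g)(|F|-1) < (1/2-\beta)(|F|-1)$. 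By Chebyshev, the probability (over $h$) that this count reaches $(|F|-d)/2$ — i.e. that $f$ and $g$ disagree on at least half (minus a $d/|F| = o(1)$ slack) of the line — is $O\big(1/(\beta^2 |F|)\big) = o(1)$, once we also average over $x$ and use Markov to say that for all but $o(1)$ fraction of $x$ this conditional probability is itself $o(1)$. (More carefully: over random $x,h$ jointly the disagreement-fraction on the line is concentrated around $d(f,g)$, and one uses a two-level Markov argument to push the exceptional set of $x$ down to $o(1)$.) For such a good $x$, strictly more than half the offsets $h$ give $P_{x,h}^{(f,d)}(0) = g(x)$, so $\plurality_h\{P_{x,h}^{(f,d)}(0)\} = g(x)$, i.e. $\corr_f(x) = g(x)$. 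Hence $d(\corr_f, g) \le (\text{fraction of bad } x) = o(1)$.

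The main obstacle I anticipate is the concentration estimate in the borderline case: when $d(f,g)$ is genuinely close to $1/2$, the disagreement count on a random line has mean just barely below $|F|/2$, so Chebyshev's bound on the probability of exceeding $|F|/2$ is only $O(1/(\beta^2|F|))$ — this is $o(1)$ since $|F|\ge cd \to\infty$, but it is \emph{not} a small constant, so one must be careful that the two-level Markov argument (bounding the fraction of $x$ for which the conditional bad-line probability exceeds, say, $1/4$) still yields an $o(1)$ fraction of exceptional $x$ rather than a constant fraction. The slack $\beta > 0$ is exactly what makes this work: the mean disagreement is bounded away from $1/2$ by $\beta$, independent of $c$, while the variance contributes only an $O(1/|F|)$ correction, so the bound is genuinely $o(1)$ and the lemma follows. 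I would also note in passing that the $d/|F|$ slack separating "$f$ agrees with $g$ on a majority of the line" from "$P_{x,h}^{(f,d)}\equiv g|_{l_{x,h}}$" is $o(1)$ by the hypothesis $|F|\ge cd$, and can be absorbed into $\beta$ for $c$ large.
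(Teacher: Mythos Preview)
Your approach is essentially the paper's: use pairwise independence of the points on a random line to get Chebyshev concentration of the disagreement fraction around $d(f,g)<1/2-\beta$, conclude that for all but an $o(1)$ fraction of $(x,h)$ the line polynomial $P^{(f,d)}_{x,h}$ coincides with $g|_{l_{x,h}}$, and then Markov on $x$ to bound the set of $x$ for which this fails for a majority of $h$.

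There is, however, a genuine technical slip in your main paragraph. You write ``Fix $x$ and consider a uniformly random offset $h$; the points $\{x+t\cdot h:t\ne 0\}$ are pairwise independent and uniform over $F^m$.'' This is false: for fixed $x$ and distinct nonzero $t_1,t_2$, the point $x+t_2 h$ is a deterministic function of $x+t_1 h$ (namely $(1-t_2/t_1)x + (t_2/t_1)(x+t_1 h)$), so the indicators are far from pairwise independent and Chebyshev over $h$ alone gives nothing. (Concretely, if the disagreement set happened to be a hyperplane through $x$, every line through $x$ would lie either entirely inside or entirely outside it.) What \emph{is} true, and what the paper uses, is that when both $x$ and $h$ are random, the map $(x,h)\mapsto(x+t_1 h,\,x+t_2 h)$ is a bijection on $F^{2m}$ for $t_1\ne t_2$, so the points on a random line are pairwise independent over the joint randomness. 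Your parenthetical ``More carefully: over random $x,h$ jointly\ldots'' is exactly the right fix, and once you run Chebyshev over $(x,h)$ and then Markov on $x$, your argument goes through and matches the paper's. Just promote that parenthetical to the main line and drop the fixed-$x$ claim.
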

\begin{proof}
Consider randomly chosen $x,h \in F^m$ and the line $l_{x,h}$. Notice
that this line represents a pairwise independent collection of points
from $F^m$. Thus with probability $1 - \alpha$, where $\alpha
= o(1)$,
the number of points, $y$,
on $l_{x,h}$ such that $f(y) \ne g(y)$ is less than $1/2 - \epsilon/2$
and in such cases $\pfd_{x,h} \equiv \pgd_{x,h}$.

Now consider the set $B = \{x | \pfd_{x,h} \not\equiv \pgd_{x,h}
\mbox{ for a majority of } h \in F^m\}$. Based on the above argument
notice that the fraction $|B|/|F|^m$ is at most $2\alpha =
o(1)$. But for $x \not\in
B$, we have $\corr_f(x) = g(x)$.
\end{proof}

The main lemma we prove is the following:
\begin{lemma}
\label{main lemma}
$\forall f: F^m \to F \mbox{ s.t. } |F| > 16/\epsilon^2 \mbox{ and }
\delta_f \leq 1/8 - \epsilon$, $\delta_{\corr_f}
< \delta_f$.
\end{lemma}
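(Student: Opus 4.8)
The plan is to show that the "0-slice" consistency error $\delta_{\corr_f}$ of the corrected function is strictly smaller than that of $f$, by exploiting that $\corr_f$ already agrees with $f$ everywhere except on a set of density $\le 2\delta_f$ (Lemma~\ref{2deltaf}), and that a random line sees this bad set only in a pairwise-independent fashion. First I would set $g = \corr_f$ and let $\delta' = \delta_{g}$; the goal is $\delta' < \delta_f$. The natural overall strategy is to assume $\delta' \ge \delta_f$ and derive a contradiction by counting, over random $x,h$, the event that $g(x) \ne \pgd_{x,h}(0)$.

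The key steps, in order, would be: (1) Fix a random line direction $h$ and a random point $x$, and analyze the line $l_{x,h}$, which is a pairwise-independent sample of $|F|$ points of $F^m$. Since $d(f,g) \le 2\delta_f \le 1/4$, Chebyshev's inequality (pairwise independence) gives that with probability $1 - o(1)$ the fraction of $y \in l_{x,h}$ with $f(y)\ne g(y)$ is bounded away from $1/2$ — here is where the hypothesis $|F| > 16/\epsilon^2$ is used to make the $o(1)$ term meaningful — and hence on such a line $\pfd_{x,h} \equiv \pgd_{x,h}$. (2) For such "good" lines, the event $g(x) \ne \pgd_{x,h}(0)$ coincides with $g(x) \ne \pfd_{x,h}(0)$. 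Now break this using the triangle inequality through $f(x)$: either $f(x) \ne g(x)$ (probability $\le 2\delta_f$, but this is exactly the bad set on which we have control), or $f(x) \ne \pfd_{x,h}(0)$ (probability $\delta_f$ by definition). (3) The crucial quantitative point is that the first of these two events must be \emph{paid for only once}: a point $x$ with $f(x)\ne g(x)=\corr_f(x)$ has, by definition of plurality, $\pfd_{x,h}(0) = g(x)$ for at least a $1/|F|$-fraction (in fact the plurality share) of directions $h$, so it does \emph{not} contribute to $\delta_g$ for those $h$. More carefully, one argues that for $x$ in the bad set $\{f(x)\ne g(x)\}$, the fraction of $h$ with $\pgd_{x,h}(0)\ne g(x)$ is at most $1 - (\text{plurality fraction}) \le 1 - 1/|F| = 1-o(1)$, but one needs a genuine \emph{gain}, not merely $\le$; the honest route is to observe that on good lines $\pgd_{x,h}(0)=\pfd_{x,h}(0)$, and average: $\delta_g \le o(1) + \Pr_{x,h}[g(x)\ne \pfd_{x,h}(0),\ \text{line good}]$, and then bound the latter by $\Pr_{x,h}[f(x)\ne\pfd_{x,h}(0)] + \Pr_{x,h}[f(x)\ne g(x),\ \pfd_{x,h}(0)=g(x)]$, where the second term is negative mass that is subtracted — i.e., one uses $\Pr[g(x)\ne\pfd_{x,h}(0)] \le \Pr[f(x)\ne\pfd_{x,h}(0)] - \Pr[f(x)\ne g(x)=\pfd_{x,h}(0)] + \Pr[f(x)=g(x)\ne\pfd_{x,h}(0)]$ and checks that the correction set contributes strictly positive savings of order $\delta_f$ because on good lines through a corrected point, plurality agreement happens with non-negligible frequency. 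Collecting: $\delta_g \le \delta_f - \Omega(\delta_f) + o(1) < \delta_f$, using $\delta_f \le 1/8 - \epsilon$ so the $o(1)$ slack (which shrinks as $c,|F|\to\infty$) is dominated.

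I expect the main obstacle to be step (3): carefully accounting for the bad set $\{f\ne\corr_f\}$ so that it is \emph{not} double-counted, and in fact yields a strict decrease rather than merely $\delta_g \le \delta_f + o(1)$. The delicate point is that a point $x$ where $f$ and $\corr_f$ disagree must nonetheless have $\corr_f(x) = \pfd_{x,h}(0)$ for a plurality of $h$, and on the $(1-o(1))$-fraction of good lines this equals $\pgd_{x,h}(0)$; so such $x$ contributes its "$f$ is wrong on the line at $0$" event to $\delta_f$ but contributes \emph{nothing} to the corresponding $\delta_g$ event for those plurality directions. Quantifying that the plurality fraction is bounded below by something like $\frac{1}{2}d(f,g)$-free constant (here again $|F|>16/\epsilon^2$ and Lemma~\ref{1/2th}-type reasoning enter) is what converts "$\le$" into "$<$". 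Once that is pinned down, the inequality $\delta_{\corr_f} < \delta_f$ follows by combining the line-goodness estimate from step (1) with the bookkeeping of step (3).
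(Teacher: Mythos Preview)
Your approach is genuinely different from the paper's, and it has a real gap at exactly the place you flag as ``the main obstacle.''

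In step~(3) you need a \emph{quantitative} gain at the points $x$ with $f(x)\ne g(x)=\corr_f(x)$: you need $\Pr_h[\pfd_{x,h}(0)=g(x)] - \Pr_h[\pfd_{x,h}(0)=f(x)]$ to be bounded below by something that dominates the $O(\delta_f/|F|)$ loss you incur from ``bad'' lines. But the definition of plurality only gives that this difference is $\ge 0$; it can be exactly~$0$ (ties) or arbitrarily small. So your bookkeeping yields at best $\delta_{\corr_f}\le \delta_f + O(\delta_f/|F|)$, which is the wrong direction, and the appeal to a ``$\tfrac{1}{2}d(f,g)$-free constant'' lower bound on the plurality share is unsupported by any hypothesis here. (Step~(2) also has a subtler flaw: the inference ``$f,g$ close on the line $\Rightarrow \pfd_{x,h}\equiv\pgd_{x,h}$'' is borrowed from the proof of Lemma~\ref{1/2th}, but there $g$ is a degree-$d$ \emph{polynomial}, so $\pgd_{x,h}$ is literally $g$ restricted to the line and hence agrees with $g$ everywhere on it. For $g=\corr_f$, which is not yet known to be a polynomial, two functions that are close on a line can still have different nearest degree-$d$ approximants.)

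The paper sidesteps all of this by proving a much stronger intermediate fact (Lemma~\ref{main lemma eq}): for two independent directions,
\[
\Pr_{x,h_1,h_2}\bigl[\pfd_{x,h_1}(0)\ne \pfd_{x,h_2}(0)\bigr] \;\le\; \frac{16}{\epsilon^2|F|}\,\delta_f .
\]
This is obtained by sampling a random ``bilinear'' surface $\{x+ih_1+jh_2+ijh_3:i,j\in F\}$, whose rows and columns are \emph{pairwise-independent} random lines; four applications of Chebyshev control the row/column error rates, and then the bivariate Polishchuk--Spielman lemma (via Lemma~\ref{strengthen}) produces a single polynomial $Q(i,j)$ of degree $d$ in each variable that matches most rows and most columns, from which one reads off $\pfd_{x,h_1}(0)=Q(0,0)=\pfd_{x,h_2}(0)$. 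Finally, a one-line ``plurality beats a random draw'' observation gives
\[
\Pr_{x,h}\bigl[\corr_f(x)\ne\pfd_{x,h}(0)\bigr]\;\le\;\Pr_{x,h_1,h_2}\bigl[\pfd_{x,h_1}(0)\ne\pfd_{x,h_2}(0)\bigr],
\]
so $\delta_{\corr_f}$ is bounded by a \emph{multiplicative} factor $16/(\epsilon^2|F|)<1$ times $\delta_f$. The bivariate lemma is the missing ingredient; a single-line analysis together with plurality bookkeeping cannot by itself manufacture this contraction.
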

We defer the proof to the next subsection. We first show why this
suffices.
\begin{proof}[Theorem~\ref{main theorem}]
We prove this theorem by induction of $\delta$. (Observe that since
we are talking of functions over finite domains, $\delta$ can only take
finitely many values.) Say the theorem is true for functions
$f,\{P_{x,h}\}$ with
$$\Pr_{x,h,t} \left[P_{x,h}(t) \ne f(x +
t\cdot h)\right] < \delta.$$
Now consider functions $f,\{P_{x,h}\}$ with
$$\Pr_{x,h,t} \left[P_{x,h}(t) \ne f(x +
t\cdot h)\right] = \delta.$$
For such a $f$ consider the function $\corr_f$. By Lemma~\ref{main
lemma},
$$ \Pr_{x,h,t} \left[ P^{(\corr_f,d)}_{x,h} (t) \ne \corr_f(x + t\cdot h)
\right] = \delta_{\corr_f} < \delta_f.$$
By induction there exists a degree
$d$ polynomial $g$ such that $d(\corr_f,g) \leq
2\delta_{\corr_f} < 2\delta_f$. By
Lemma~\ref{2deltaf} $d(f,\corr_f) \leq 2\delta_f$. Thus $d(f,g) \leq
4\delta_f \leq 4 \delta \leq 1/2 - 4\epsilon$.
By Lemma~\ref{1/2th} $d(\corr_f,g) =
o(1)$. This in turn implies that $d(f,g) \leq 1/4 - \epsilon + o(1)$. By
Lemma~\ref{1/4th} we now conclude that $\corr_f = g$ implying that
$d(f,g) \leq 2\delta_f$.
\end{proof}

\subsection{Proof of Main Lemma}

The proof of Lemma~\ref{main lemma} relies on a minor strengthening of the
following lemma due to \cite{PS}, which in turn improves upon a similar
lemma in \cite{AS}.
\begin{lemma}
[\cite{PS}]
\label{ASPS}
For any $\epsilon > 0$, if
$r_i$ and $c_j$ are families of degree $d$ polynomials such that
$$\Pr_{i,j \in F} \left[ r_i(j) \ne c_j(i) \right] \leq 1/4 - \epsilon,$$
then there exists a bivariate polynomial $Q$ of degree $d$ in each variable
such that
$$\Pr_{i,j \in F} \left[ r_i(j) \ne Q(i,j) \mbox{ or } c_j(i) \ne Q(i,j)
\right] \leq 1/2 -\epsilon.$$
\end{lemma}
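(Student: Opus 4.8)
\medskip
\noindent{\bf Proof plan.}
This is the bivariate low-degree testing lemma of~\cite{PS}, which sharpens a similar statement of~\cite{AS}; here is the shape of the argument I would give. Assume $|F|$ is large compared with $d$ (the only regime in which the lemma is used; when $|F|\le d+1$ every function on $F^2$ already has both individual degrees at most $d$, and one may simply take $Q(i,j)=r_i(j)$, for which the probability in question equals $d(R,C)\le 1/4-\epsilon$). Write $R(i,j):=r_i(j)$ for the row table, an exact polynomial of degree $\le d$ on every row, and $C(i,j):=c_j(i)$ for the column table, an exact polynomial of degree $\le d$ on every column; the hypothesis reads $d(R,C)=\delta\le 1/4-\epsilon$. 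It suffices to produce a bivariate polynomial $Q$ with $\deg_x Q\le d$, $\deg_y Q\le d$ and $d(Q,R)\le 1/4$: then $\Pr_{i,j}[\,Q(i,j)=R(i,j)=C(i,j)\,]\ge 1-d(Q,R)-d(R,C)\ge 1-1/4-(1/4-\epsilon)=1/2+\epsilon$, which is the claim.

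To build $Q$, I would use that over a large field the Reed--Solomon code of degree $\le d$ decodes uniquely up to relative radius about $1/2$. By Markov's inequality all but a $(2\delta+o(1))$-fraction of the rows $i$ are {\em clean}, meaning that $C(i,\cdot)$ agrees with $r_i$ on more than a $(1/2+d/|F|)$-fraction of the columns, so that $r_i$ is the {\em unique} univariate polynomial of degree $\le d$ within the decoding radius of $C(i,\cdot)$; symmetrically for columns. Fix $d+1$ clean rows $i_0,\dots,i_d$ (there are at least $(1/2+2\epsilon-o(1))|F|>d+1$ of them), and set $Q(x,y):=\sum_{m=0}^{d} r_{i_m}(y)\,\ell_m(x)$, where $\ell_0,\dots,\ell_d$ are the degree-$d$ Lagrange basis polynomials at $i_0,\dots,i_d$; then $\deg_x Q\le d$, $\deg_y Q\le d$, and $Q(i_m,\cdot)=r_{i_m}$. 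For any column $j$ that is {\em consistent} with all $d+1$ chosen rows (that is, $r_{i_m}(j)=c_j(i_m)$ for every $m$), the degree-$\le d$ polynomials $Q(\cdot,j)$ and $c_j$ agree at the $d+1$ points $i_0,\dots,i_d$ and hence coincide, so on such columns $Q$ agrees with $R$ wherever $R$ agrees with $C$; summing over columns, $d(Q,R)\le(\text{fraction of inconsistent columns})+\delta$, and one wants this to be at most $1/4$.

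The obstacle --- and the reason the lemma is a theorem rather than a computation --- is exactly this last bound. Estimating the fraction of columns that disagree with {\em some} of the $d+1$ chosen rows by a union bound costs a factor $d+1$, and even choosing the $d+1$ rows with the fewest disagreements produces only a fraction of order $d\delta$; the elementary argument therefore succeeds only for $\delta=O_\epsilon(1/d)$, which is precisely the range obtained in~\cite{RS} (and the weaker $O(1/d^2)$ of~\cite{GLRSW}). Raising $\delta$ all the way to the constant $1/4-\epsilon$ requires the finer analysis of~\cite{AS,PS}: rather than freezing a single interpolation set, one alternates corrections in the row and the column directions and controls the geometric decay of the successive corrections using $|F|\gg d$, equivalently exploiting the near-product structure forced on the disagreement set by having two low-degree directions. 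This is the core we import from~\cite{PS} essentially verbatim, together with the minor strengthening of it that the proof of Lemma~\ref{main lemma} uses; carrying it out in full is the step not reproduced here.
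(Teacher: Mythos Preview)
Your proposal is aligned with the paper: the paper does not prove Lemma~\ref{ASPS} at all but imports it wholesale from~\cite{PS}, exactly as you do in your final paragraph. Your additional exposition---setting up the row/column tables, the Lagrange interpolation on $d+1$ clean rows, and the explanation of why the elementary union-bound argument stalls at $\delta=O(1/d)$---is correct and helpful context, but it is not part of the paper's treatment, which simply states the lemma with attribution and moves directly to the strengthening in Lemma~\ref{strengthen}.
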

We first strengthen the conclusion obtained above slightly.
\begin{lemma}
\label{strengthen}
Let $\epsilon \geq {d/|F|}$ and
Let $r_i$ and $c_j$ be families of degree $d$ polynomials such that
$$\Pr_{i,j \in F} \left[ r_i(j) \ne c_j(i) \right] \leq 1/4 - \epsilon.$$
Then there exists a bivariate polynomial $Q$ of degree $d$ in each variable
such that
$$\Pr_{i \in F} \left[ r_i(\cdot) \ne Q(i,\cdot) \right] \leq 1/4$$
$$\mbox{ and } \Pr_{j \in F} \left[ c_j(\cdot) \ne  Q(\cdot,j)\right] \leq 1/4.$$
\end{lemma}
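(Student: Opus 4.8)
The plan is to take the polynomial $Q$ handed to us by Lemma~\ref{ASPS} and then argue that $Q$ in fact agrees, \emph{as a polynomial}, with all but a $1/4$-fraction of the rows $r_i$ and with all but a $1/4$-fraction of the columns $c_j$. The one thing that makes this more than a formality is that Lemma~\ref{ASPS} only controls the \emph{pointwise} disagreement probability, and does so only up to $1/2-\epsilon$; taken at face value this would bound the fraction of bad rows (columns) only by roughly $1/2$. To get down to $1/4$ one has to go back to the sharper hypothesis $\Pr_{i,j}[r_i(j)\neq c_j(i)]\le 1/4-\epsilon$ and use the rigidity of univariate degree-$d$ polynomials.

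Concretely, I would first apply Lemma~\ref{ASPS} (legitimate since $\epsilon\ge d/|F|>0$) to obtain a bivariate $Q$, of degree $d$ in each variable, with $\Pr_{i,j}[\,r_i(j)\neq Q(i,j)\text{ or }c_j(i)\neq Q(i,j)\,]\le 1/2-\epsilon$; in particular $\Pr_{i,j}[r_i(j)\neq Q(i,j)]\le 1/2-\epsilon$ and symmetrically for columns. Put $a=\Pr_i[r_i(\cdot)\not\equiv Q(i,\cdot)]$ and $b=\Pr_j[c_j(\cdot)\not\equiv Q(\cdot,j)]$; the goal is $a,b\le 1/4$. The basic fact I would use repeatedly is that two distinct univariate polynomials of degree at most $d$ agree in at most $d$ points, so for a bad row $i$ we have $r_i(j)\neq Q(i,j)$ for all but at most $d$ values of $j$, and similarly for columns.

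From this I would extract two inequalities. (i) Each bad row contributes at least $(1-d/|F|)$ to $\Pr_{i,j}[r_i(j)\neq Q(i,j)]$, so $a(1-d/|F|)\le 1/2-\epsilon$; since $d/|F|\le\epsilon$ this gives $a(1-\epsilon)\le 1/2-\epsilon$, hence $a<(1-\epsilon)/2$, and likewise $b<(1-\epsilon)/2$. (ii) For $i$ a bad row and $j$ a \emph{good} column we have $c_j(i)=Q(i,j)$, and for all but at most $d$ such $j$ also $r_i(j)\neq Q(i,j)$, hence $r_i(j)\neq c_j(i)$; this yields at least $\big(a(1-b)-ad/|F|\big)|F|^2$ pairs $(i,j)$ with $r_i(j)\neq c_j(i)$. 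The mirror-image construction ($j$ a bad column, $i$ a good row) yields at least $\big(b(1-a)-bd/|F|\big)|F|^2$ further such pairs, and the two families are disjoint because one forces $i$ to be a bad row and the other forces $i$ to be a good row. Adding, and using $d/|F|\le\epsilon$ again, we get $\Pr_{i,j}[r_i(j)\neq c_j(i)]\ge (a+b)(1-\epsilon)-2ab$, so the hypothesis gives $(a+b)(1-\epsilon)-2ab\le 1/4-\epsilon$.

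Finally I would combine the two: since $(a+b)(1-\epsilon)-2ab = a(1-\epsilon)+b\big((1-\epsilon)-2a\big)$ and $a<(1-\epsilon)/2$ by (i), the second summand is nonnegative, so $a(1-\epsilon)\le 1/4-\epsilon$, i.e.\ $a\le (1/4-\epsilon)/(1-\epsilon)\le 1/4$; by symmetry $b\le 1/4$, which is the claim. The step I expect to be the real content is (ii) together with this last combination --- realizing that Lemma~\ref{ASPS} alone is not enough and that one must re-use the original $1/4-\epsilon$ bound via a disjoint count of row/column ``errors''; everything else is bookkeeping with the ``at most $d$ agreements'' fact and the standing assumption $\epsilon\ge d/|F|$.
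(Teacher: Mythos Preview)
Your proof is correct and follows essentially the same route as the paper: obtain $Q$ from Lemma~\ref{ASPS}, let $a,b$ be the fractions of bad rows and bad columns, count disagreements $r_i(j)\neq c_j(i)$ separately over (bad row)$\times$(good column) and (good row)$\times$(bad column) using the ``at most $d$ agreements'' fact, and arrive at the inequality $a(1-b-d/|F|)+b(1-a-d/|F|)\le 1/4-\epsilon$. The paper then finishes by invoking $x,y\le 1/2$ (without justification) and $\epsilon\ge d/|F|$ to deduce $x,y\le 1/4$; your step~(i) makes this explicit by deriving $a,b<(1-\epsilon)/2$ from the pointwise conclusion of Lemma~\ref{ASPS}, and your final algebraic manipulation is just a clean way to carry out the same deduction.
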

\begin{proof}
This lemma follows in a straightforward manner from Lemma~\ref{ASPS}. Let $Q$
be the bivariate polynomial guaranteed by Lemma~\ref{ASPS}. We define the
{\em bad rows} and {\em bad columns} as follows. Let
$$B_{\rm row} = \{i \in F | r_i(\cdot) \ne Q(i,\cdot)\} \mbox{ and let $x =
|B_{\rm row}|/|F|$.}$$
Similarly let
$$B_{\rm col} = \{j \in F | c_j(\cdot) \ne Q(\cdot,j)\} \mbox{ and let $y
= |B_{\rm col}|/|F|$.}$$
We count the number of points in $B_{\rm row} \times (F - B_{\rm col})$ which
satisfy $r_i(j) \ne c_j(i)$. For each bad row $i$ , there are at most $d$ points
for which $r_i(j) = Q(i,j)$. All the remaining points must lie on a bad
column or must satisfy $r_i(j) \ne c_j(i)$. Thus the fraction of violations in
any bad row (from the good columns) is at least $(1 - d/|F| - y)$. Similarly
we count the violations in bad columns and good rows and summing all theses
violations we get:
\begin{tabbing}
\=mm\=mm\=mm\=mm\=mm\=mm\=mm\=mm\=mm\kill
\> $ 1/4 - \epsilon $ \\
\> \> $\geq $ \> $ \Pr_{i,j} \left[ r_i(j) \ne c_j(i) \right] $ \\
\> \> $\geq $ \> $ ( \Pr_{i} \left[ i \in B_{\rm row} \right] $ \\
\> \> \>       $  * \Pr_{j} \left[ j \not\in B_{\rm col}
\mbox{ and } c_j(i) \ne r_i(j)  | i \in B_{\rm row} \right] )$ \\
\> \> \> $+ ( \Pr_{j} \left[ j \in B_{\rm col} \right] $\\
\> \> \> $ * \Pr_{i} \left[ i \not\in B_{\rm row}
\mbox{ and } c_j(i) \ne r_i(j)  | j \in B_{\rm col} \right] )$ \\
\> \> $\geq $ \> $x(1 - y - {d \over |F|}) + y(1 - x - {d \over |F|})$.
\end{tabbing}
We now use the fact that $x,y \leq 1/2$ and that $\epsilon \geq d/|F|$, to
reduce the above to
$x \leq 1/4$ and $y \leq 1/4$.
\end{proof}

We are now almost ready to prove Lemma~\ref{main lemma}. We first prove a
variant and then show how it implies the final result.
\begin{lemma}
\label{main lemma eq}
If $\delta_f \leq 1/8 - \epsilon$, then
for $x,h_1,h_2$ chosen uniformly at random from $F^m$,
$$\Pr_{x,h_1,h_2} \left [
\pfd_{x,h_1} (0) \ne \pfd_{x,h_2} (0)
\right ] \leq 4 \alpha \delta_f \mbox{ where $\alpha
= {4 \over \epsilon^2|F|}$}.$$
\end{lemma}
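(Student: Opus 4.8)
The plan is to carry out the argument inside a random $2$-plane, reduce the disagreement event to a statement about two univariate degree-$d$ polynomials differing on an entire line, and extract the crucial factor $1/|F|$ from a Chebyshev estimate over a pairwise-independent family of $|F|$ sample points; an ordinary Markov step at the analogous place would only buy a bound of order $\delta_f$, which is far too weak.

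I would take $x,h_1,h_2$ uniform and work in the plane $\Pi=\{x+sh_1+th_2\}$, with the ``row'' polynomials $r_s=\pfd_{x+sh_1,h_2}$ and ``column'' polynomials $c_t=\pfd_{x+th_2,h_1}$; the quantity to be bounded is $\Pr[r_0(0)\ne c_0(0)]$, since $r_0(0)=\pfd_{x,h_2}(0)$ and $c_0(0)=\pfd_{x,h_1}(0)$. Reparametrising exactly as in Fact~\ref{deltadeltaf} (for fixed $s$ the point $x+sh_1$ is uniform while $h_2,t$ are uniform, so $r_s(t)$ disagrees with $f$ at that point with probability $\delta_f$, and symmetrically for columns) shows that the expected consistency defect $\Pr_{s,t}[r_s(t)\ne c_t(s)]$ is at most $2\delta_f\le 1/4-2\epsilon$, so for most planes Lemma~\ref{strengthen} applies and produces a bivariate polynomial $Q=Q_\Pi$ of degree $\le d$ in each variable matching almost all rows and columns; one then upgrades this to the statement that $Q_\Pi$ agrees with $f$ on all but an $O(\delta_f)$ fraction of $\Pi$.

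The key reduction is that if $r_0(0)\ne c_0(0)$ then we cannot have both $c_0\equiv Q_\Pi(\cdot,0)$ and $r_0\equiv Q_\Pi(0,\cdot)$; say $c_0\not\equiv Q_\Pi(\cdot,0)$. Two distinct degree-$\le d$ univariate polynomials agree on at most $d$ of the $|F|$ points of column $0$, while $c_0$ itself differs from $f$ on column $0$ in only a (typically $\delta_f$-sized) fraction of places; hence the fixed polynomial $Q_\Pi(\cdot,0)$ differs from $f$ on a constant fraction — indeed all but $o(|F|)$ — of the points of column $0$. Now once the relevant line and $Q_\Pi(\cdot,0)$ are fixed, the predicate ``$f(y)\ne Q_\Pi(y)$'' is local to $y$. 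Choosing the order in which the randomness is exposed so that this line and $Q_\Pi$ are determined first while a residual base-point degree of freedom makes the $|F|$ points of the line a pairwise-independent sample of $F^m$, the number of bad points becomes a sum of $|F|$ pairwise-independent indicators, each of probability $O(\delta_f)$ (this is where $Q_\Pi$ being close to $f$ on a random plane is needed, in an averaged form). Since that sum must exceed roughly $|F|/2$, Chebyshev bounds its probability by something of order $\delta_f/(\epsilon^2|F|)$; tracking the constants — the $4/(\epsilon^2|F|)$ in $\alpha$ is exactly the Chebyshev ratio of the variance to (squared gap times $|F|$) — gives the asserted $4\alpha\delta_f$.

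The step I expect to fight with is producing $Q_\Pi$ with enough strength. Lemma~\ref{strengthen} by itself only gives a match with $3/4$ of the rows and columns, and a random ``row $0$'' would then be a bad row with probability up to $1/4$ — useless for a bound as small as $4\alpha\delta_f$ — so one must bootstrap: exploit that on a matched row $Q_\Pi$ differs from $f$ only at that line's error points, that the bad rows and bad columns are each at most $1/4$ so their intersection is a tiny fraction of $\Pi$, and a self-correction inside $\Pi$, in order to conclude that $Q_\Pi$ agrees with $f$ on all but $O(\delta_f)$ of $\Pi$, and to show that the planes where this (or Lemma~\ref{strengthen} itself) fails have measure comfortably inside the $O(\alpha\delta_f)$ budget. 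The second delicate point is purely bookkeeping but essential: because $Q_\Pi$ depends on the whole plane, the conditioning must be arranged so that it (or an equivalent fixed polynomial) is frozen before the randomness making the sampled points pairwise independent is exposed; otherwise the ``bad point'' indicators are not a pairwise-independent family and the Chebyshev step evaporates.
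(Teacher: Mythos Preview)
Your instinct that a Chebyshev step over a pairwise-independent family is what buys the factor $1/|F|$ is exactly right, and your final paragraph correctly locates the danger: the polynomial $Q_\Pi$ depends on the whole plane, and the sample points you want to Chebyshev over are determined by the same randomness. The trouble is that in your setup this danger is not merely bookkeeping; it is fatal.

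In the flat plane $\Pi=\{x+sh_1+th_2\}$ every row has the \emph{same} direction $h_2$ and every column the same direction $h_1$. Thus the lines $\{\row_s\}_s$ are not a pairwise-independent family of lines in $F^m$: conditioned on $h_2$ they share a direction, and the mean of your indicators (e.g.\ $\Pr_{x'}[\pfd_{x',h_2}(0)\ne f(x')]$) then depends on $h_2$ and can be far from $\delta_f$. More to the point, the event you must bound, $\pfd_{x,h_1}(0)\ne\pfd_{x,h_2}(0)$, is a deterministic function of $(x,h_1,h_2)$. Once those are chosen there is no ``residual base-point degree of freedom'' left: the points of column $0$ are simply $\{x+sh_1\}$, fixed, and $Q_\Pi$ is fixed too. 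So the conditioning you ask for cannot be arranged inside a flat plane; the Chebyshev step evaporates exactly as you feared, and the best you can extract from the plane is a Markov bound of order $\delta_f$.

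The paper's proof supplies precisely the missing idea: it introduces a fourth random vector $h_3$ and works not on a plane but on the ruled surface $\{x+ih_1+jh_2+ijh_3:i,j\in F\}$. Now row $i$ is the line through $x+ih_1$ with direction $h_2+ih_3$, and for $i_1\ne i_2$ these rows are genuinely pairwise-independent uniform lines (independent base points from $(x,h_1)$, independent directions from $(h_2,h_3)$). The quantities $\delta(\row_i)$ and the indicators $[r_i(0)\ne m(i,0)]$ are therefore pairwise-independent with mean $\delta_f$, and four applications of Chebyshev give the $4\alpha\delta_f$ bound. Since the target event does not involve $h_3$, averaging it out costs nothing. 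There is no bootstrap of $Q$ beyond Lemma~\ref{strengthen}; the $3/4$ from that lemma combines with the $1/8-\epsilon/2$ from Chebyshev to pin down $c_0\equiv Q(\cdot,0)$ and $r_0\equiv Q(0,\cdot)$ directly.
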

\remove{
{\em Note: To see that this suffices, observe that this implies that
$\Pr_{x,h_2} [ \corr_f(x) \ne \pfd_{x,h_2} (0) ] \leq \delta_f/2$.
In turn this implies that $\Pr_{x,h_2} [ \corr_f(x) \ne
P^{(\corr_f)}_{x,h_2} (0) ] \leq \delta_f/2$.
}
}
\begin{proof}
Pick $x,h_1,h_2,h_3$ at random from $F^m$ and consider
the set of points $\{x + ih_1 + jh_2 + ijh_3 | i,j \in F\}$.
We partition this set in two ways - by ``rows'' and by ``columns'' as
follows. For $i \in F$ let $\row_i = \{ x + ih_1 + jh_2 + ijh_3 | j
\in F\}$. Similarly for $j \in F$ let $\col_j = \{ x + ih_1 + jh_2 +
ijh_3 | i \in F\}$. Notice that each row and column is a line from the
space $F^m$. We first observe that these are actually random lines
(Here we call the distribution of lines picked by choosing a line
$l_{x,h}$ by picking $x,h \in F^m$ uniformly and randomly, to be the
{\em uniform} distribution over lines.)
\begin{claim}
For $i_1 \ne i_2 \in F$, the rows $\row_{i_1}$ and $\row_{i_2}$ are
independently and uniformly distributed over lines in $F^m$. (Similarly
for the columns.)
\end{claim}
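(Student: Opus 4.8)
The plan is to write each row as a line whose basepoint and offset are explicit linear functions of the random choices $x,h_1,h_2,h_3$, and then to observe that for $i_1\neq i_2$ the map sending $(x,h_1,h_2,h_3)$ to the two basepoint/offset pairs is a linear bijection of $(F^m)^4$.

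First I would rewrite the row in the obvious way: since
$$\row_i=\{\,(x+i h_1)+j(h_2+i h_3)\mid j\in F\,\},$$
the set $\row_i$ is exactly the line $l_{x+i h_1,\;h_2+i h_3}$. Recall that the uniform distribution over lines is, by definition, the image of the uniform distribution on $F^m\times F^m$ under the map $(p,v)\mapsto l_{p,v}$. Hence it suffices to prove that the pair
$$\bigl(\,(x+i_1 h_1,\;h_2+i_1 h_3),\ (x+i_2 h_1,\;h_2+i_2 h_3)\,\bigr)$$
is uniformly distributed on $(F^m\times F^m)^2$: independence and uniformity of $\row_{i_1},\row_{i_2}$ then follow at once, because each $\row_{i_k}$ is a fixed (deterministic) function of its own basepoint/offset pair, and deterministic images of independent, uniform random variables are again independent and (here) uniform over lines.

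Next I would note that the map $\Phi:(F^m)^4\to(F^m)^4$ given by $\Phi(x,h_1,h_2,h_3)=(x+i_1 h_1,\ h_2+i_1 h_3,\ x+i_2 h_1,\ h_2+i_2 h_3)$ is $F$-linear and splits as a direct sum of the map $(x,h_1)\mapsto(x+i_1 h_1,\ x+i_2 h_1)$ and the map $(h_2,h_3)\mapsto(h_2+i_1 h_3,\ h_2+i_2 h_3)$. Each of these is given coordinatewise by the $2\times 2$ matrix with rows $(1,i_1)$ and $(1,i_2)$, whose determinant is $i_2-i_1$, nonzero since $i_1\neq i_2$. Therefore $\Phi$ is a bijection, so it carries the uniform distribution on $(F^m)^4$ to itself; reading off the two coordinate pairs gives the required independence and uniformity. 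The argument for columns is identical after writing $\col_j=l_{x+j h_2,\;h_1+j h_3}$ and using $j_1\neq j_2$.

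There is essentially no obstacle here beyond bookkeeping; the only point deserving care is the (standard) remark that ``uniform over lines'' is the push-forward of the uniform measure on basepoint/offset pairs, so that the non-injectivity of the map $(p,v)\mapsto l_{p,v}$ does no harm — uniformity and independence of the pairs transfer immediately to the lines.
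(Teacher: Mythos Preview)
Your argument is correct. The paper itself states this claim without proof, treating it as a routine observation; your verification --- writing $\row_i=l_{x+ih_1,\,h_2+ih_3}$ and noting that the linear map $(x,h_1,h_2,h_3)\mapsto(x+i_1h_1,\,h_2+i_1h_3,\,x+i_2h_1,\,h_2+i_2h_3)$ is invertible because $i_1\neq i_2$ --- is exactly the intended justification.
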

Let $m(i,j) = f(x + ih_1 + jh_2 + ijh_3)$. Further let $r_i(\cdot) =
\pfd_{x + ih_1,h_2 + ih_3} (\cdot)$ and $c_j(\cdot) = \pfd_{x +
jh_2, h_1 + jh_3} (\cdot)$.
For a line $l_{x,h}$ from $F^m$, define $\delta(l_{x,h})$ to be
$\Pr_{t \in F} [ f(x + th) \ne \pfd_{x,h} (t)]$. Notice that $E_{x,h}
[\delta(l_{x,h})] = \delta_f$. The pairwise independence of the lines
implies that the collection of real numbers $\{\delta(\row_i)\}_{i \in
F}$ is a pairwise independent collection of variables taking values from
$[0,1]$ with expectation $\delta_f$. The second moment method thus allows
us to estimate the mean of this sample and shows that:
\begin{eqnarray}
\lefteqn{\Pr_{x,h_1,h_2,h_3} \left[
\sum_i \delta(\row_i) / |F| \geq 1/8 - \epsilon/2
\right] } \nonumber \\
& & \leq  \alpha \delta_f (1 - \delta_f) \mbox{ where }
\alpha = {4 \over \epsilon^2|F|}.
\label{first eqn}
\end{eqnarray}
A similar analysis applied to the columns yields:
\begin{eqnarray}
\lefteqn{\Pr_{x,h_1,h_2,h_3} \left[
\sum_j \delta(\col_j) / |F| \geq 1/8 - \epsilon/2
\right] }
\nonumber \\
& & \leq  \alpha \delta_f (1 - \delta_f) \mbox{ where }
\alpha = {4 \over \epsilon^2|F|}.
\label{second eqn}
\end{eqnarray}
By combining (\ref{first eqn}) and (\ref{second eqn}) yields
that with probability all but at most $2\alpha \delta_f$ over
four tuples $(x,h_1,h_2,h_3)$ we have,
$\Pr_{i,j \in F} [ r_i(j) \ne c_j(i) ] \leq 1/4 - \epsilon$. This allows
us to apply Lemma~\ref{strengthen} to claim that
for at least $3/4$ fraction of the $i$'s, $r_i(\cdot) \equiv
Q(i,\cdot)$ (and similarly for the columns).

Once again, we use pairwise independence  to show that
\begin{eqnarray}
\lefteqn{\Pr_{x,h_1,h_2,h_3} \left[
\{ i \in F | r_i(0) \ne m(i,0) \}\right.} \nonumber \\
& & \geq \left. (1/8 - \epsilon/2) |F|
\right] \nonumber \\
& \leq & \alpha \delta_f (1 - \delta_f) \mbox{ where }
\alpha = {4 \over \epsilon^2|F|}.
\label{third eqn}
\end{eqnarray}
\begin{eqnarray}
\lefteqn{\Pr_{x,h_1,h_2,h_3} \left[
\{ j \in F | c_j(0) \ne m(0,j) \} \right. } \nonumber \\
& & \geq \left. (1/8 - \epsilon/2) |F|
\right] \nonumber \\
& \leq  & \alpha \delta_f (1 - \delta_f) \mbox{ where }
\alpha = {4 \over \epsilon^2|F|}.
\label{fourth eqn}
\end{eqnarray}
Thus we now see that with probability at least $1 - 4 \alpha \delta_f$
all the events in (\ref{first eqn}), (\ref{second eqn}),
(\ref{third eqn}) and (\ref{fourth eqn}) hold.
In this case
$m(i,0) = Q(i,0)$ for at least $3/4 - 1/8 + \epsilon$
fraction of $i \in F$, which implies that $c_0(\cdot) = Q(\cdot,0)$.
Thus we have $P_{x,h_2}^{(f)} (\cdot) =
c_0 (\cdot) = Q(\cdot,0)$. Similarly $P_{x,h_1}^{(f)} (\cdot) = Q(0,\cdot)$.
Thus $P_{x,h_2}^{(f)} (0) = P_{x,h_1}^{(f)} (0) = Q(0,0)$.
\end{proof}
\begin{proof}[Lemma~\ref{main lemma}]
We start with the following observation:
\begin{eqnarray*}
\lefteqn{
\forall x, ~~~~ \Pr_{h_2} \left [
\plurality_{h_1} \{\pfd_{x,h_1}(0)\} \ne \pfd_{x,h_2}(0) \right] }
\\
& & \leq
\Pr_{h_1,h_2} \left [ \pfd_{x,h_1}(0) \ne \pfd_{x,h_2}(0) \right].
\end{eqnarray*}
We prove the above by running two different probabilistic experiments.
Say, a bag
has a number of colored balls, with the distribution of the number of balls
of each color being known. In the first game we nominate a color and
then pick a random ball and we lose if the color of the randomly chosen ball
is different from the nominated one. In the second game we pick two
balls (with replacement) at random from the bag and lose if the balls have
different colors. It is clear that in the first game the best choice is
to deterministically pick the most often occuring color in the bag, while the
second game corresponds to a mixed strategy for nominating the color in
the first game. Thus we are no more likely to lose in the first game than in the
second. The inequality above represents this analysis, with the
$h$'s corresponding to the balls and $\pfd_{x,h}(0)$'s
corresponding to their colors.

We now use the inequality above as follows:
\begin{eqnarray*}
\lefteqn{
E_{x} \left[ \Pr_{h_2} \left [
\begin{array}{r}
\corr_f(x)
= \plurality_{h_1} \{\pfd_{x,h_1}(0)\} \\
\ne \pfd_{x,h_2}(0)
\end{array}
\right] \right] } \\
& & \leq  E_{x} \left[ \Pr_{h_1,h_2} \left [
\pfd_{x,h_1}(0) \ne \pfd_{x,h_2}(0) \right]\right].
\end{eqnarray*}
In turn this implies
\begin{eqnarray*}
\delta_{\corr_f} & = &
\Pr_{x,h_2} \left[ \corr_f(x) \ne  \pfd_{x,h_2}(0) \right] \\
&\leq & \Pr_{x,h_1,h_2} \left[ \pfd_{x,h_1}(0) \ne \pfd_{x,h_2}(0)
\right].
\end{eqnarray*}
By Lemma~\ref{main lemma eq} the last quantity above is bounded by
$4\alpha\delta_f$. Thus if we choose $|F|$ to be sufficiently large
(strictly greater than $(16/\epsilon^2)$) then we get the conclusion
$\delta_{\corr_f} < \delta_f$.
\end{proof}

\section{Conclusions}
\label{next}

Here we list the two main consequences of Theorem~\ref{main theorem}.
The first is a straightforward corollary of the efficiency of the lines
test and talks about the local testability property of the
Polynomial-Line Codes (see Definition~\ref{defn:poly-line}).
\begin{theorem}
The Polynomial-Line Codes are $(2,1/8 -\epsilon)$ locally testable.
\end{theorem}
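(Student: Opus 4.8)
The plan is to produce an explicit $2$‑query test for the Polynomial‑Line code and then feed it into \thref{main theorem} essentially as a black box. A word of the code is a family $w=\{P_{x,h}\}_{x,h\in F^m}$ in which each coordinate $P_{x,h}\in\Sigma=F^{d+1}$ is read as a univariate polynomial of degree $\le d$. The first thing I would pin down is the \emph{correct} test, since the naive one --- ``pick a point $y$ and directions $h_1,h_2$, check $P_{y,h_1}(0)=P_{y,h_2}(0)$'' --- is not sound: a word whose coordinates agree at their basepoints but are arbitrary elsewhere passes it yet is maximally far from every codeword. Instead I would use $\Omega=F^m\times(F^m\times F)^2$, sample $(y,h_1,t_1,h_2,t_2)$ uniformly, query the two coordinates $P_{y-t_1h_1,h_1}$ and $P_{y-t_2h_2,h_2}$ (both underlying lines pass through $y$, at parameters $t_1$ and $t_2$), and accept iff $P_{y-t_1h_1,h_1}(t_1)=P_{y-t_2h_2,h_2}(t_2)$; informally, pick a random point, two random lines through it, and check the two line polynomials agree at that point. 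This reads $p=2$ letters and $\Omega$ is trivially sampleable. Completeness is immediate: if $w=C(f)$ with $\deg(f)\le d$ then $P_{x,h}(t)=f(x+t\cdot h)$, so both queried values equal $f(y)$.

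For soundness I would assume the test rejects with probability $<1/8-\epsilon$ and decode as follows. Define $f:F^m\to F$ by $f(y)=\plurality_{h,t}\{P_{y-th,h}(t)\}$. The two‑game (plurality) argument already used in the proof of \lemref{main lemma} --- with the ``urn'' being the set of pairs $(h,t)$ and the ``colour'' of $(h,t)$ being $P_{y-th,h}(t)$ --- gives, for each fixed $y$, that $\Pr_{h,t}[f(y)\ne P_{y-th,h}(t)]$ is at most the probability that two independently drawn colours differ; averaging over uniform $y$, the latter is exactly the test's rejection probability, so $\Pr_{y,h,t}[f(y)\ne P_{y-th,h}(t)]<1/8-\epsilon$. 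The map $(x,h,t)\leftrightarrow(y,h,t)$ with $y=x+t\cdot h$ is a measure‑preserving bijection of $F^m\times F^m\times F$, and under it this becomes precisely $\Pr_{x,h,t}[P_{x,h}(t)\ne f(x+t\cdot h)]<1/8-\epsilon$, which is the hypothesis of \thref{main theorem} (applied with this same $\epsilon$, fixing the constant $c$ and hence the requirement $|F|\ge cd$). That theorem then supplies a degree‑$d$ polynomial $g$ with $d(f,g)\le 2\,\Pr_{x,h,t}[P_{x,h}(t)\ne f(x+t\cdot h)]<1/4-2\epsilon$.

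The last step is routine distance bookkeeping. Since $x+t\cdot h$ is uniform, $\Pr_{x,h,t}[f(x+t\cdot h)\ne g(x+t\cdot h)]=d(f,g)$, so by a union bound $\Pr_{x,h,t}[P_{x,h}(t)\ne g(x+t\cdot h)]<3/8-3\epsilon$; and on any fixed line the two degree‑$\le d$ univariate polynomials $P_{x,h}(\cdot)$ and $g(x+\cdot\,h)$ either coincide or disagree on at least a $1-d/|F|$ fraction of $F$, whence $d(w,C(g))=\Pr_{x,h}[P_{x,h}(\cdot)\not\equiv g(x+\cdot\,h)]\le (3/8-3\epsilon)/(1-d/|F|)$, which is $<3/8$ once $c$ is taken large enough. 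The same Schwartz--Zippel estimate applied to the difference of two distinct degree‑$\le d$ polynomials shows the code has relative minimum distance at least $1-d/|F|>3/4$, so $w$ lies within strictly less than half the minimum distance of $C(g)$; uniqueness of such a codeword follows from the triangle inequality, and the decoded message is the coefficient vector of $g$. The only real content is choosing the test (with genuine line parameters, not just evaluation at the basepoint) and the auxiliary function $f$ so that \thref{main theorem} applies verbatim; I expect the one genuine pitfall to be exactly the unsoundness of the naive ``compare at the basepoint'' test noted above, everything else being standard.
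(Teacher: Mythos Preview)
Your argument is correct. The paper presents this only as an immediate corollary of \thref{main theorem} and does not spell out the test, but the route closest to the tester $\calT$ described at the start of the efficiency section is more direct: read the point values of $f$ straight from the degenerate-line coordinates of the word by setting $f(y)=P_{y,0}(0)$, and test by picking $x,h,t$ uniformly, querying the two coordinates $(x+t\cdot h,\,0)$ and $(x,h)$, and accepting iff $P_{x+t\cdot h,\,0}(0)=P_{x,h}(t)$. The rejection probability is then literally $\Pr_{x,h,t}[P_{x,h}(t)\ne f(x+t\cdot h)]$, so \thref{main theorem} applies with no intermediate step. Your two-line consistency test together with the plurality-defined $f$ is a legitimate alternative; it costs the extra urn inequality (borrowed from the proof of \lemref{main lemma}) but has the mild advantage that both queries land on generic lines rather than one being supported on the negligible set $\{h=0\}$. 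The closing distance bookkeeping---the union bound, the ``agree everywhere or disagree on a $(1-d/|F|)$-fraction'' dichotomy for two degree-$d$ univariates, and uniqueness from the code's minimum distance---is the same either way.
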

By applying Theorem~\ref{main theorem} to the task of constructing
efficient probabilistic verifiers, we get
small ``transparent'' proofs with low query complexity.
The transparent proofs so obtained are only slightly super-quadratic
($n^{2 + \epsilon}$-sized - where $n$ is the size of traditional proof)
in the length of the traditional proofs and the verifier probes them in
at most 165 bits and always accepts correct proofs, while rejecting
incorrect theorems with probability $1/2$.
To be able to lay out precise bounds
on the size of the proof, one needs to be careful about the model of
computing used to define the size of a proof. The model we use here is
the same as that used by \cite{PS}. In fact our verifier uses theirs
as a black box and then builds upon it.
In addition to the use of such size-efficient proof systems
our construction also use many ingredients from the query-efficient
proofs of \cite{BGLR}. The recursion mechanism of \cite{AS} plays a
central role in the combination of the various proof systems used here.
The final ingredient in the proof system is the randomness-efficient
parallelization protocol of \cite{ALMSS} (which is where the efficiency
of the tester of \cite{RS} plays a role).
Details of the construction
will be available in the full paper.

Last we would also like to mention two interesting questions
that may be  raised about locally checkable codes.
\begin{enumerate}
\item Does there exist a family of good $(2,1/2)$ locally-checkable
codes?
\item Does there exist such a family of codes with constant alphabet
size?
\end{enumerate}

\section*{Acknowledgments}

We would like to thank Laci Babai and Oded Goldreich for their valuable
comments. We also thank Steven Phillips for
providing us with a copy of the manuscript \cite{PhSa}.

\end{document}